\documentclass[11pt]{article}

\usepackage[latin1]{inputenc}
\usepackage[T1]{fontenc}
\usepackage[english]{babel}
\usepackage{amsfonts}
\usepackage{amssymb}
\usepackage{amsmath}
\usepackage{amsthm}
\usepackage{graphicx}

\newtheorem{proposition}{Proposition}
\newtheorem{lemma}{Lemma}
\newtheorem{theorem}{Theorem}

\newtheorem{remark}{Remark}
\newcommand{\meanv}[1]{\left\langle#1\right\rangle}

\def\be{\begin{equation}}
\def\ee{\end{equation}}
\def\bc{\begin{center}}
\def\ec{\end{center}}

\setlength{\textheight}{22cm} \setlength{\textwidth}{16cm}
\setlength{\oddsidemargin}{1cm} \setlength{\evensidemargin}{1cm}

\newcommand{\OO}[1]{O \left(\frac{1}{#1}\right)}

\begin{document}

\title{Equilibrium statistical mechanics of bipartite spin systems}


\author{Adriano Barra\footnote{Dipartimento di Fisica, Sapienza Universit\`a di Roma \& Gruppo Nazionale Fisica Matematica, Sezione Roma1},
Giuseppe Genovese \footnote{Dipartimento di Fisica \& Dipartimento di Matematica,
    Sapienza Universit\`a di Roma} \ and Francesco Guerra \footnote{Dipartimento di Fisica, Sapienza Universit\`a di Roma \& Istituto Nazionale Fisica Nucleare, Sezione Roma1}}

\date{October 2010}

\maketitle

\begin{abstract}
Aim of this paper is to give an extensive treatment of bipartite
mean field spin systems, ordered and disordered: at first, bipartite
ferromagnets are investigated, achieving an explicit expression
for the free energy
trough a new minimax variational principle. Furthermore via the Hamilton-Jacobi technique the same free energy structure is obtained together with the existence of its thermodynamic limit and the minimax principle is connected to a standard max one.
\newline
The same is investigated for bipartite spin-glasses: By the
Borel-Cantelli lemma a control of the high temperature regime is
obtained, while via the double stochastic stability technique we
get also the explicit expression of the free energy at the replica
symmetric level, uniquely defined by a minimax variational
principle again.
\newline
A general results that states that the free energies of these
systems are convex linear combinations of their independent one
party model counterparts is achieved too.
\newline
For the sake of completeness we show further that at zero
temperature the replica  symmetric entropy becomes negative and,
consequently,  such a symmetry must be broken. The treatment of the fully broken replica symmetry case is deferred to a forthcoming paper. As a
first step in this direction,   we start
deriving the linear and quadratic constraints to overlap
fluctuations.
\begin{center}
{\em Keywords:} equilibrium statistical mechanics, bipartite
systems, spin glasses
\end{center}

\end{abstract}

\section{Introduction}

The investigation by statistical mechanics of simple and complex
mean field spin systems is experiencing  a huge increasing
interest in the last decades. The motivations are at least
two-fold: from one side, at the rigorous mathematical level, even
though several contributions appeared along the years (see for
instance
\cite{broken}\cite{limterm}\cite{talagrand}\cite{aizenman}), a
full clear picture is still to be achieved (it is enough to think
at the whole community dealing with ultrametricity in the case of
random interactions as in glasses
\cite{MPV}\cite{panchenko}\cite{arguin}); at the applied level,
these toy models are starting to be used in several different
contexts, ranging from quantitative sociology
\cite{BC1}\cite{brock}\cite{sw} to theoretical biology
\cite{parisi}\cite{barragliari}.
\newline
It is then obvious the need for always stronger and simpler
methods to analyze the enormous amount of all the possible
''variations on theme'', the theme being the standard Curie-Weiss
model (CW) \cite{barra0}\cite{leshouches} for the simpler systems,
or the paradigmatic Sherrington-Kirkpatrick model (SK)
\cite{MPV}\cite{broken} for the complex ones.
\newline
As a result, inspired by recent attention payed on two groups in
interactions (i.e. decision making processes in econometrics
\cite{macfadden}\cite{bip} or metabolic networks in biology
\cite{enzo}\cite{pagnani}), we decided to focus, in this paper, on
the equilibrium statistical mechanics of two bipartite spin
systems, namely the bipartite CW and the bipartite SK.
\newline
At first we approach to the problem of bipartite model by studying
in Section 2 the bipartite ferromagnet, obtaining, both via a
standard approach and through our mechanical interpretation of the
interpolation method
\cite{sum-rules}\cite{barra0}\cite{io1}\cite{dibiasio}, a
variational principle for the free energy in thermodynamic limit.
\newline
In Section 3 we open the investigation of the bipartite spin glass
model, and following the path yet outlined in
\cite{BG1}\cite{BGG1} we get the annealed free energy (more
precisely, the pressure), with a characterization of the region of
the phase diagram where it coincides with the true one in the
thermodynamic limit, and the replica symmetric free energy, by the
double stochastic stability technique, which stems from a minimax
variational principle, whose properties are discussed too.
\newline
Finally we calculate the zero temperature observable and, by
noticing that the entropy is negative defined, we conclude that
replica symmetry must be broken. Despite a full replica symmetry
breaking scheme deserves a whole work, here we start introducing
its typical linear and quadratic constraints, obtained by Landau
self-averaging of the internal energy \cite{GG}.
\newline
A last section is left for conclusions and outlooks.

\section{Bipartite ferromagnets}

We are interested in considering a set of $N$ Ising spin
variables, in which it is precisely defined a partition in two
subsets of size respectively $N_1$ and $N_2$. We assume the
variable's label of the first subset as $\sigma_i$, $i=1, ...,
N_1$, while the spins of the second one are introduced by
$\tau_j$, $j=1, ..., N_2$.  Of course we have $N_1+N_2=N$, and we
name the relative size of the two subsets $N_1/N=\alpha_N$,
$N_2/N=1-\alpha_N$.
\newline
For the sake of simplicity, in what follows we deal with both
parties formed by dichotomic variables, while we stress that the
method works on a very general class of random variables with
symmetric probability measure and compact support  \cite{io1}.
\newline
The spins interact via the Hamiltonian $H_{N_1,N_2}(\sigma, \tau, h_1, h_2)$, with $h_1\geq 0, \ h_2 \geq 0$:
$$
H_{N_1, N_2}(\sigma, \tau, h_1,
h_2)=-\frac{2}{N_1+N_2}\sum_{i=1}^{N_1}\sum_{j=1}^{N_2}\sigma_i\tau_j
- h_1\sum_{i=1}^{N_1}\sigma_i-h_2\sum_{j=1}^{N_2}\tau_j.
$$
We notice that spins in each subsystem interact only with spins in
the other one, but not among themselves; we have chosen to skip
the self interactions to tackle only genuine features stemming
from the exchange ones. The reader interested in a (different)
treatment of bipartite ferromagnetic models  with self
interactions may refer to \cite{bip}.
\newline
Partition function $Z$, pressure $A$ and free energy per site $f$ are defined naturally
for the model:
\begin{eqnarray}
Z_{N_1, N_2}(\beta, h_1, h_2)&=& \sum_{\sigma}\sum_{\tau}e^{-\beta H_{N_1, N_2}(\sigma, \tau, h_1, h_2)},\nonumber\\
A_{N_1, N_2}(\beta, h_1, h_2)&=&\frac{1}{N_1+N_2}\log Z_{N_1, N_2}(\beta, h_1, h_2),\nonumber\\
f_{N_1, N_2}(\beta, h_1, h_2)&=&-\frac{1}{\beta}A_{N_1, N_2}(\beta, h_1, h_2),\nonumber
\end{eqnarray}
while the thermodynamic limit of $A,f$ will be denoted via
$A(\alpha,\beta,h_1h_2)=\lim_N A_{N_1,N_2}(\beta,h_1,h_2)$ and
$f(\alpha,\beta,h_1,h_2)= \lim_N f_{N_1,N_2}(\beta,h_1,h_2)$,
where we stressed the prescription adopted in taking the infinite
volume limit, performed in such a way that when
$N,N_1,N_2\to\infty$, $N_1/N\to\alpha\in(0,1)$, and $N_2/N\to
1-\alpha \in (0,1)$. Taken $z(\sigma,\tau)$ as a generic function
of the spin variables, we can also specify the Boltzmann state of
our system as
\begin{equation}
\meanv{z(\sigma, \tau)}=\frac{\sum_{\sigma}\sum_{\tau} z(\sigma,
\tau)\exp(-\beta H_{N_1, N_2}(\sigma, \tau, h_1, h_2))}{Z_{N_1, N_2}(\beta, h_1,
h_2)}\label{eq:state-bip}.
\end{equation}
As usual the order parameters (the respective magnetizations of
the two systems) are
\begin{eqnarray}
m_{N_1}=\frac{1}{N_1}\sum_i^{N_1}\sigma_i,\\ \label{eq:m-bip}
n_{N_2}=\frac{1}{N_2}\sum_j^{N_2}\tau_j, \label{eq:n-bip}
\end{eqnarray}
thus the Hamiltonan reads off as \be\label{hamilto1} H_{N_1,
N_2}(\sigma,\tau, h_1, h_2)=-N\left[2\frac{\alpha_N}{1+\alpha_N}
m_{N_1}n_{N_2}+h_1m_{N_1}+h_2\alpha_Nn_{N_2}\right]. \ee
\begin{remark}
As will be clear soon, the choice of the factor $2$ in the
Hamiltonian is made in such a way that the balanced bipartite
model with $\alpha=1/2$ has the same critical point of the one
party model, i.e. $\beta=1$.
\end{remark}

\subsection{The occurrence of a minimax principle for the free energy}

Now we give the explicit form of the pressure of the model,
together with some interesting properties. The main result is the
following:
\begin{theorem}
In the thermodynamic limit, the pressure of the bipartite ferromagnetic model is given by the following variational principle
\begin{equation}
A(\alpha, \beta,h_1,h_2)=\max_{\bar{m}}\min_{\bar{n}} A_{trial}(\bar{m},\bar{n}),
\end{equation}
where
\begin{equation} A_{trial}(\bar{m},\bar{n})=
\log2 +\alpha\log\cosh\left(2\beta(1-\alpha) \bar{n} + h_1\right)
+(1-\alpha)\log\cosh\left( 2\beta\alpha \bar{m} + h_2
\right)-2\beta\alpha(1-\alpha)\bar{m}\bar{n}.\nonumber
\end{equation}
Furthermore, the solution is uniquely defined by the intersection
of
\begin{eqnarray}
\bar m&=&\tanh\left( 2\beta(1-\alpha)\bar n + h_1 \right)\label{eq:self1},\\
\bar n&=&\tanh\left( 2\beta\alpha\bar m + h_2
\right)\label{eq:self2},
\end{eqnarray}
for $\bar{m}\geq 0$ and $\bar{n} \geq 0$.
\end{theorem}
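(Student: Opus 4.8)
The plan is to derive an exact sum rule by Guerra's interpolation, valid for every choice of trial magnetizations, and then to optimize the trial parameters. For $t\in[0,1]$ and fixed $\bar m,\bar n\geq 0$ I would introduce the interpolating pressure
$$A_N(t)=\frac1N\log\sum_{\sigma,\tau}\exp\!\Big(2\beta\tfrac{N_1N_2}{N}\big[t\,m_{N_1}n_{N_2}+(1-t)(\bar n\,m_{N_1}+\bar m\,n_{N_2})\big]+h_1\sum_{i}\sigma_i+h_2\sum_{j}\tau_j\Big),$$
so that $A_N(1)=A_{N_1,N_2}$ is the true model, while at $t=0$ the system factorizes over single spins, each feeling an effective field, and a direct computation gives $A_N(0)=A_{trial}(\bar m,\bar n)+2\beta\alpha_N(1-\alpha_N)\bar m\bar n$. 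Differentiating and using the identity $m_{N_1}n_{N_2}-\bar n\,m_{N_1}-\bar m\,n_{N_2}=(m_{N_1}-\bar m)(n_{N_2}-\bar n)-\bar m\bar n$ yields
$$\frac{d}{dt}A_N(t)=2\beta\alpha_N(1-\alpha_N)\big[\meanv{(m_{N_1}-\bar m)(n_{N_2}-\bar n)}_t-\bar m\bar n\big],$$
and integration in $t$ produces the sum rule
$$A_{N_1,N_2}=A_{trial}(\bar m,\bar n)+2\beta\alpha_N(1-\alpha_N)\int_0^1\meanv{(m_{N_1}-\bar m)(n_{N_2}-\bar n)}_t\,dt,$$
exact for every $\bar m,\bar n$.

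The heart of the matter is the sign of the remainder, the covariance between the two magnetizations around the trial point. Unlike the one-party Curie--Weiss case, where the analogous remainder is a variance of fixed sign and the principle is a plain maximum, here the cross-covariance has no definite sign, and this is precisely the origin of the minimax. I would make this explicit through the decomposition $2\alpha(1-\alpha)mn=\tfrac{\alpha(1-\alpha)}{2}[(m+n)^2-(m-n)^2]$, linearizing the convex square $(m+n)^2$ from below by a tangent line (optimized by one trial parameter) and the concave square $-(m-n)^2$ from above by a tangent line (optimized by the other). The two linearizations must be optimized in opposite senses, maximization for the convex part and minimization for the concave one, so that the resulting upper and lower bounds sandwich $A_{N_1,N_2}$ and coincide only at the stationary point, giving $A=\max_{\bar m}\min_{\bar n}A_{trial}$. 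Equivalently, using the Legendre identity $\log 2+\log\cosh x=\max_m[\,mx+s(m)\,]$ with $s$ the binary entropy, one checks that $A_{trial}(\bar m,\bar n)=\max_{m,n}\{\mathcal E(m,n)-2\beta\alpha(1-\alpha)(m-\bar m)(n-\bar n)\}$, where $\mathcal E$ is the entropy-form objective $\alpha s(m)+(1-\alpha)s(n)+\alpha h_1 m+(1-\alpha)h_2 n+2\beta\alpha(1-\alpha)mn$; the indefinite bilinear correction converts the genuine maximum of $\mathcal E$ into the claimed minimax. The exchange of the two optimizations is justified by Sion's theorem once one notes that partial minimization in the convex direction leaves, by the Schur-complement computation $g''=\det H/\partial^2_{\bar n\bar n}A_{trial}$, a concave problem in the remaining variable.

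It then remains to identify the optimizer and prove uniqueness. Imposing $\partial_{\bar m}A_{trial}=\partial_{\bar n}A_{trial}=0$ reproduces exactly the self-consistency equations (\ref{eq:self1})--(\ref{eq:self2}); composing them gives the single fixed-point equation $\bar m=\tanh\!\big(2\beta(1-\alpha)\tanh(2\beta\alpha\bar m+h_2)+h_1\big)=:F(\bar m)$. Since $h_1,h_2\geq 0$, the map $F$ is increasing and concave on $\bar m\geq 0$ (a composition of increasing concave functions evaluated on nonnegative arguments), satisfies $F(0)\geq 0$, and is uniformly bounded by $1$, so $F(\bar m)-\bar m$ is concave and crosses zero exactly once, yielding a unique nonnegative solution $(\bar m,\bar n)$, which is the announced saddle (the zero-field criticality being settled by the sign of $F'(0)-1$).

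The step I expect to be the main obstacle is controlling the covariance remainder uniformly, that is, showing the sandwiching bounds actually coincide in the thermodynamic limit and that the selected extremum is the interior saddle rather than a boundary one. This amounts to ruling out anomalous fluctuations of $(m_{N_1}-\bar m)(n_{N_2}-\bar n)$, which is delicate near the critical line where the reduced Hessian degenerates, $\det H\to 0$; away from criticality it follows from concentration of the two magnetizations on the self-consistent values, while the critical region must be closed using the convexity of $A_{N_1,N_2}$ in the external fields together with the explicit control of $A_{trial}$.
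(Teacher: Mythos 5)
Your opening is exactly the paper's own first step: the same interpolation, the same sum rule $A_{N_1,N_2}=A_{trial}(\bar m,\bar n)+2\beta\alpha_N(1-\alpha_N)\int_0^1\langle(m_{N_1}-\bar m)(n_{N_2}-\bar n)\rangle_t\,dt$ (your prefactor is in fact the correct one; the paper drops a factor $2$), and the same observation that the remainder has no definite sign. Your uniqueness argument via the increasing concave map $F$ is also sound, and arguably more careful than the paper's. The genuine gap is in the step that is supposed to identify the limit. You claim that linearizing $(m+n)^2$ from below by a tangent and $-(m-n)^2$ from above by a tangent produces ``upper and lower bounds that sandwich $A_{N_1,N_2}$''. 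It does not: substituting a tangent for the convex square lowers the exponent, substituting a tangent for the concave square raises it, so performing both replacements at once yields a partition function that is neither above nor below $Z_{N_1,N_2}$. If instead you perform them one at a time, each branch retains a quadratic interaction: the upper-bound branch still carries $(m+n)^2$ (solvable by standard Curie--Weiss arguments), but the lower-bound branch still carries $-(m-n)^2$, which, being concave, admits no linear lower bound whatsoever --- this is precisely the direction that tangent/interpolation tricks cannot close, and your final paragraph in effect concedes it is left open. The paper closes it by a different device: it marginalizes over the $\tau$ party, so that $n$ is replaced by $\tanh(2\alpha\beta m+h_2)$, then uses $\log\cosh x\geq\log\cosh\bar x+(x-\bar x)\tanh\bar x$ to obtain $A_{N_1,N_2}\geq\hat A_{trial}(\bar m)$, a one-parameter family of lower bounds whose optimization reproduces the minimax value, while the sign of the sum-rule remainder on the constrained curve $\bar n=\bar n(\bar m)$ supplies the matching inequality.

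Two further points. First, your Legendre identity $A_{trial}(\bar m,\bar n)=\max_{m,n}\left\{\mathcal{E}(m,n)-2\beta\alpha(1-\alpha)(m-\bar m)(n-\bar n)\right\}$ is correct (the bilinear correction cancels the cross term, so the inner problem is separable and concave), and it would give a genuinely different, clean proof of the minimax --- but only if you first establish $A=\max_{m,n}\mathcal{E}(m,n)$ by an independent route, e.g.\ counting configurations at fixed $(m_{N_1},n_{N_2})$ and Stirling (the ``direct marginalization'' the paper alludes to); as written you derive it from the invalid sandwich, so it inherits the gap. Second, the Sion/Schur-complement justification fails exactly where the theorem is interesting: $A_{trial}$ is convex in \emph{both} $\bar m$ and $\bar n$, so Sion's quasiconcave--quasiconvex hypotheses do not hold, and $g(\bar m)=\min_{\bar n}A_{trial}(\bar m,\bar n)$ is not concave in the supercritical region: at $h_1=h_2=0$ the envelope theorem gives $g''(0)=2\beta\alpha(1-\alpha)\left[2\beta\alpha-\frac{1}{2\beta(1-\alpha)}\right]$, which is positive precisely when $4\beta^2\alpha(1-\alpha)>1$, so $g$ has a local \emph{minimum} at the origin there (this is why the outer maximization selects the nontrivial solution). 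The concavity that actually holds, and that the paper invokes, is in the variable $\bar m^2$; and note that no exchange of extrema is needed at all, since the theorem asserts the value in the stated order $\max_{\bar m}\min_{\bar n}$.
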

\begin{proof}
The proof can be achieved in several ways (i.e. a direct approach
by marginalizing the free energy with respect to both the
parties); we chose to follow the path outlined in \cite{BGG1} as
this may act as a guide later, dealing with frustrated
interactions.
\newline
For the sake of convenience, we rename $\beta h_1 \to h_1$ and
$\beta h_2 \to h_2$ as switching back to the original variables is
straightforward in every moment but this lightens the notation.
\newline
To our task we need to introduce two trial parameters
$\bar{m},\bar{n}$ mimic the magnetizations inside each party, one
interpolating parameter $t \in[0,1]$ and an interpolating function
$A(t)$ for the free energy as follows
\begin{equation}
A(t) = \frac{1}{N}\log \sum_{\sigma}\sum_{\tau}e^{t\Big( 2\beta
\frac{N_1 N_2}{N_1+N_2}m(\sigma)n(\tau) \Big)}e^{(1-t)\Big(
2\beta(1-\alpha)\bar{n}\sum_i^{N_1} \sigma_i + 2 \beta \alpha
\bar{m} \sum_j^{N_2} \tau_j \Big)}e^{h_1\sum_i \sigma_i + h_2
\sum_j \tau_j},
\end{equation}
such that, for $t=1$ our interpolating function reduces to the
free energy of the model (the pressure strictly speaking), while
for $t=0$ reduces to a sum of one-body models whose solution is
straightforward.
\newline
We can then apply the fundamental theorem of calculus to get the
following sum rule
\begin{equation}
A(1)= A(0) + \int_0^1 \frac{dA(t)}{dt} dt.
\end{equation}
To quantify the latter we need to sort out both the streaming of
the interpolating function as well as its value at $t=0$, namely
\begin{eqnarray}\label{streaming}
\partial_t A(t) &=& 2\beta \alpha (1-\alpha)[\langle m n \rangle - \bar{n} \langle m\rangle- \bar{m} \langle n\rangle + \bar{m}\bar{n}
] - \beta \alpha(1-\alpha)\bar{m}\bar{n},\\
A(t=0)&=&\log 2 + \alpha
\log\cosh[2\beta(1-\alpha)\bar{n}+h_1]+)(1-\alpha)\log\cosh[2\beta
\alpha \bar{m}+h_2],
\end{eqnarray}
where in equation (\ref{streaming}) we added and subtracted the
term $\beta \alpha(1-\alpha)\bar{m}\bar{n}$ so to write explicitly
the sum rule in terms of a trial function
$A_{trial}(\alpha,\beta,h_1,h_2)$ and an error term
$S(\bar{m},\bar{n})$:
\begin{equation}
A(\alpha,\beta,h_1,h_2)=
A_{trial}(\alpha,\beta,h_1,h_2)+S(\bar{m},\bar{n}),
\end{equation}
where
\begin{eqnarray}\nonumber
A_{trial}(\alpha,\beta,h_1,h_2) &=& \log 2 + \alpha \log \cosh
[2\beta (1-\alpha) \bar{n} + h_1 ] + (1-\alpha)\log \cosh [2\beta
\alpha \bar{m} + h_2] \\ &-& 2\beta
\alpha(1-\alpha)\bar{m}\bar{n}, \\
S(\bar{m},\bar{n}) &=& \beta \alpha (1-\alpha) \int_0^1 \langle (m
-\bar{m})(n-\bar{n}) \rangle_t dt,
\end{eqnarray}
for every trial functions $\bar{m},\bar{n}$.
\newline
Note that the averages $\langle . \rangle_t$ take into account
that the Boltzmannfaktor is no longer the standard one introduced
in eq.$(1)$, but incorporates the interpolating structure tuned by
the parameter $t$.
\newline
We stress that, at this stage, the error term $S(\bar{m},\bar{n})$
is the source of the fluctuations of the order parameters (which
are expected to reduce to zero in the thermodynamic limit and give
the label $S$) is not trivially positive defined as in many other
cases, even hardly to investigate (i.e. mono-party spin-glasses
\cite{broken}), however the idea of choosing properly
$\bar{m},\bar{n}$ so to make it smaller and smaller (eventually
zero) still holds obviously.
\newline
So we study at fixed $\bar{n}$ the behavior of our trial function
in $\bar{m}$ by looking at its derivative:
\begin{equation}
\partial_{\bar{m}}A_{trial}(\alpha,\beta,h_1,h_2)= 2 \beta \alpha
(1-\alpha)[\tanh (2\beta \alpha \bar{m} + h_1)-\bar{n}],
\end{equation}
so, at given $\bar{n}$ it is increasing in $\bar{m}$ and
$A_{trial}(\beta,h_1,h_2)$ convex in $\bar{m}$.
\newline
By a direct calculation we get the same result inverting $\bar{n}
\Leftrightarrow \bar{m}$:
\begin{equation}
\partial_{\bar{n}}A_{trial}(\alpha,\beta,h_1,h_2)= 2 \beta \alpha
(1-\alpha)[\tanh (2\beta (1-\alpha) \bar{n} + h_2)-\bar{m}].
\end{equation}
As it is crystal clear that the roles of $\bar{m},\bar{n}$ are of
the local magnetizations, we can allow ourselves in considering
only values $\bar{m} \geq \tanh(\beta h_1)$ such that there exist
a unique $\bar{n}(\bar{m})\geq 0 :
\tanh[2\beta(1-\alpha)\bar{n}(\bar{m})+h_1]=\bar{m}$.
\newline
From now on let us switch from $A_{trial}(\alpha,\beta,h_1,h_2)$
to a less rigorous labeling $\tilde{A}_{trial}(\bar{m},\bar{n})$
which aims to stress the relevant dependence by its variables time
by time: For this value $\bar{n}(\bar{m})$ lastly obtained, the
trial function has its minimum in $\bar{n}$ at fixed $\bar{m}$
such that we can substitute it and get
$\tilde{A}_{trial}(\bar{m})=\tilde{A}_{trial}(\bar{m},
\bar{n}(\bar{m}))$.
\newline
Now, as
\begin{equation}
\partial_{\bar{m}}\tilde{A}_{trial}(\bar{m}) = 2\beta \alpha
(1-\alpha)[\tanh(2\beta \alpha \bar{m} - h_1) - \bar{n}(\bar{m})],
\end{equation}
we can consider $\tilde{A}(\bar{m})$ as a function of $\bar{m}^2$
to see easily that $\tilde{A}$ is concave in $\bar{m}^2$, so it
has its unique maximum where its derivative vanishes.
\newline
Overall we can state that $A(\alpha,\beta,h_1,h_2)=
\sup_{\bar{m}}\inf_{\bar{n}}\tilde{A}(\bar{m},\bar{n})$, whose
stationary point is uniquely defined by the solution of the system
of self-consistence relations
\begin{eqnarray}
\bar m&=&\tanh\left( 2\beta(1-\alpha)\bar n + h_1 \right),\\
\bar n&=&\tanh\left( 2\beta\alpha\bar m + h_2 \right),
\end{eqnarray}
as stated in Theorem $1$.
\newline
Furthermore, if we restrict ourselves in considering
$\bar{n}=\bar{n}(\bar{m})$ -as imposed by the variational
principle- the error term (the fluctuation source) results
positive defined: This statement can be understood by
marginalizing with respect to the $\tau$ party the free energy
(summing over all the $\tau$-configurations) so to substitute
$n(\tau)$ by $\tanh[2\beta \alpha m(\sigma) t + 2\beta \alpha
\bar{m}(1-t) + h_1]$ and noting that $m(\sigma) \geq \bar{m}$
implies $\tanh[2\beta \alpha m t + 2\beta \alpha \bar{m}(1-t) +
h_1] \geq \bar{n}(\bar{m})$ such that
$$
\langle (m - \bar{m})[\tanh(2\alpha\beta\bar{m}+h_2)-\bar{n}] \geq
0,
$$
the error term is positive defined.
\newline
Now, in order to show that the error term is zero in the
thermodynamic limit (such that the expression of the trial becomes
correct) we proceed on a different way: the idea is to marginalize
with respect to one party, so to remain with a single
ferromagnetic party with a modified external field and then use
the standard package of knowledge developed for this case.
\newline
So, at first, we marginalize the free energy with respect to the
$\tau$ variables:
\begin{eqnarray}
A_{N_1, N_2}(\beta, h_1, h_2)&=&\frac{1}{N_1+N_2}\log Z_{N_1,
N_2}(\beta, h_1, h_2) = \\ &=&
\frac{1}{N_1+N_2}\log\sum_{\sigma}2^{N_2}\cosh^{N_2}(2\alpha\beta
m + h_2)\exp(h_1\sum_i\sigma_i) = \\ &=& (1-\alpha)\log 2 +
\frac{1}{N_1+N_2}\log\sum_{\sigma}e^{N_2 \log \cosh (2\beta\alpha
m+h_2)+h_2\sum_i\sigma_i}.
\end{eqnarray}
We can use now the convexity of the logarithm of the hyperbolic
cosine as $\log\cosh(x) \geq
\log\cosh(\bar{x})+(x-\bar{x})\tanh(\bar{x})$ to get
$$
e^{-N_2\log\cosh(2\alpha\beta m
+h_2)}e^{K_2\log\cosh(2\alpha\beta\bar{m}+h_2)}+N_2
2\alpha\beta(m-\bar{m})\tanh(2\alpha\beta\bar{m}+h_2)\leq 1,
$$
by which we bound the free energy through a new trial function
$\hat{A}_{trial}$ as
\begin{eqnarray}
A_{N_1, N_2}(\beta, h_1, h_2) &=& (1-\alpha)\log 2 +
\frac{1}{N_1+N_2}\log\sum_{\sigma}e^{N_2 \log \cosh (2\beta\alpha
m+h_2)+h_2\sum_i\sigma_i} \geq \\ \nonumber &\geq& (1-\alpha)\log
2 +
\frac{1}{N_1+N_2}\log\sum_{\sigma}\exp(N_2\log\cosh(2\alpha\beta\bar{m}+h_2))\times
\\ &\times& \exp(-2\alpha\beta N_2 \bar{m}\tanh(2\alpha \beta
\bar{m}+h_2))\exp(2\alpha\beta\tanh(2\alpha \beta
\bar{m}+h_2)\sum_i \sigma_i)=\\ \nonumber &=& \nonumber
(1-\alpha)(\log 2 + \log\cosh(2\alpha\beta \bar{m}+h_2))-2 \alpha
(1-\alpha) \beta \bar{m}\tanh(2\alpha \beta \bar{m}+h_2) +
\\ &+& \alpha \log 2 + \alpha \log\cosh[2\beta
(1-\alpha)\tanh(2\alpha\beta \bar{m}+h_2)]=\hat{A}_{trial}.
\nonumber
\end{eqnarray}
Once defined $\tilde{n}(\bar{m})=\tanh(2\alpha\beta \bar{m}+h_2)$,
we can look for the $\bar{m}$ streaming of the trial, namely
\begin{equation}
\partial_{\bar{m}}\hat{A}_{trial}(\bar{m})= (2 \beta
\alpha)^2(1-\alpha)(1-\tilde{n}^2)[\tanh(2\beta(1-\alpha)\tilde{n})-\bar{m}].
\end{equation}
If we now consider the streaming with respect to $\tilde{n}$ of
$\hat{A}_{trial}$ we get
\begin{equation}
\partial{\tilde{n}}\hat{A}_{trial}(\bar{m})=
(\partial_{\tilde{n}}\bar{m})\partial_{\bar{m}}\hat{A}_{trial}(\bar{m})=2\beta\alpha(1-\alpha)[\tanh(2\beta(1-\alpha)\tilde{n})-\bar{m}(\tilde{n})].
\end{equation}
So the streaming is decreasing in $\tilde{n}$ and the trial is
concave in $\tilde{n}$, there exist a unique maximum where the
derivative vanishes: Properly choosing $\bar{m}\to \bar{m}$ and
$\tilde{n}\to \bar{n}$ we get the statement of the theorem.
\end{proof}

As we are going to deepen the extremization  procedure in these
bipartite models, we stress that an important feature that seems
to arise from our study, is the occurrence of a $\min\max$
principle for the free energy, usually given by a maximum
principle in the ordered models, and a minimum principle in the
frustrated ones.
\newline
We finally report an interesting result  about the form of the
pressure (or equivalently, the free energy) of the model. Indeed
it turns out to be written as the convex combination of the
pressures of two different monopartite CW models, at different
inverse temperatures, as stated by the following

\begin{proposition}
In the thermodynamic limit, the following decomposition of bipartite ferromagnetic model free energies into convex sums of monoparty ones  is allowed:
$$
A(\alpha, \beta, h_1, h_2)=\alpha A^{CW}(\beta', h_1)+(1-\alpha)A^{CW}(\beta'', h_2),
$$
with $\beta'=2\beta(1-\alpha)\frac{\bar n}{\bar n}$ and $\beta''=2\beta\alpha\frac{\bar n}{\bar m}$.
\end{proposition}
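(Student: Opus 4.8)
The plan is to read off the closed form of the monopartite Curie--Weiss pressure and then recognise the bipartite pressure of Theorem~1, evaluated at its equilibrium point, as an exact term-by-term recombination of two such expressions at suitably rescaled inverse temperatures. Recall that the standard CW pressure admits the form
\[
A^{CW}(\beta,h)=\log 2+\log\cosh(\beta\bar x+h)-\tfrac{\beta}{2}\bar x^2,
\]
where $\bar x$ is the magnetization fixed by $\bar x=\tanh(\beta\bar x+h)$, the normalization of the CW Hamiltonian being chosen so that the critical point sits at $\beta=1$, consistently with the Remark. So the first step is simply to write this formula down for the two pairs $(\beta',h_1)$ and $(\beta'',h_2)$, with $\beta'=2\beta(1-\alpha)\bar n/\bar m$ and $\beta''=2\beta\alpha\bar m/\bar n$, where $\bar m,\bar n$ denote the bipartite equilibrium values solving \eqref{eq:self1}--\eqref{eq:self2}.

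The key observation is that these choices of $\beta',\beta''$ are engineered precisely to align the effective fields: one has $\beta'\bar m=2\beta(1-\alpha)\bar n$ and $\beta''\bar n=2\beta\alpha\bar m$. Consequently the CW self-consistency equation $\bar x=\tanh(\beta'\bar x+h_1)$ is solved by $\bar x=\bar m$ exactly because of \eqref{eq:self1}, and likewise the second CW fixed-point equation is solved by $\bar n$ because of \eqref{eq:self2}. Thus the CW magnetizations at $(\beta',h_1)$ and $(\beta'',h_2)$ coincide with $\bar m$ and $\bar n$, and the two $\log\cosh$ arguments reproduce exactly the two arguments appearing in $A_{trial}$. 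I would then check the remaining pieces directly: the two constant contributions recombine as $\alpha\log 2+(1-\alpha)\log 2=\log 2$, while the quadratic terms collapse as $-\alpha\tfrac{\beta'}{2}\bar m^2-(1-\alpha)\tfrac{\beta''}{2}\bar n^2=-2\beta\alpha(1-\alpha)\bar m\bar n$, which is precisely the interaction term of $A_{trial}$. Summing $\alpha A^{CW}(\beta',h_1)+(1-\alpha)A^{CW}(\beta'',h_2)$ therefore reconstructs $A(\alpha,\beta,h_1,h_2)$ term by term, establishing the claim.

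The step I expect to require the most care is not the algebra but the well-posedness of the rescaled temperatures: $\beta'$ and $\beta''$ are defined through the ratios $\bar n/\bar m$ and $\bar m/\bar n$ of the equilibrium magnetizations, so the decomposition is implicit, the temperatures depending on the very solution one is decomposing, and it degenerates when $\bar m=0$ or $\bar n=0$. I would therefore state and prove the identity for $h_1,h_2>0$ (or in the ordered phase), where Theorem~1 guarantees a unique strictly positive solution $\bar m,\bar n>0$ and the ratios are finite, recovering the paramagnetic case $\bar m=\bar n=0$ as a boundary limit. A secondary point worth a line is to confirm that $\bar m$ (respectively $\bar n$) is genuinely the maximizer of the CW variational functional at $(\beta',h_1)$ (respectively $(\beta'',h_2)$), so that $A^{CW}(\beta',h_1)$ really equals the value of that functional at $\bar m$; this follows from concavity in the squared magnetization, exactly as exploited in the proof of Theorem~1.
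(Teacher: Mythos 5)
Your proof is correct, and it takes a genuinely different route from the paper's. (Incidentally, you have implicitly fixed the typos in the statement: the intended temperatures are $\beta'=2\beta(1-\alpha)\bar n/\bar m$ and $\beta''=2\beta\alpha\bar m/\bar n$, exactly the values you use, which are the ones appearing in the paper's own proof.) You treat the decomposition as an identity to be \emph{verified}: you take the closed form of $A$ from Theorem 1 and the closed form of $A^{CW}$, observe that the scalings are engineered so that $\beta'\bar m=2\beta(1-\alpha)\bar n$ and $\beta''\bar n=2\beta\alpha\bar m$, hence that $\bar m$ and $\bar n$ solve the CW fixed-point equations (and are the maximizers, by your concavity remark) at the rescaled temperatures via \eqref{eq:self1}--\eqref{eq:self2}, and then match the $\log 2$, $\log\cosh$ and quadratic terms. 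The paper instead \emph{derives} the decomposition from two-sided bounds on the partition function with a free splitting parameter $a$: the upper bound $A\le\alpha A^{CW}\left(2\beta(1-\alpha)a^2\right)+(1-\alpha)A^{CW}\left(2\beta\alpha/a^2\right)$ holds for every $a\neq0$ by $2mn\le a^2m^2+n^2/a^2$, while the lower bound comes from Jensen's inequality with respect to the product Gibbs state of the two independent CW systems, producing an error term proportional to $\left(a\bar m-\bar n/a\right)^2$ that vanishes precisely at $a^2=\bar n/\bar m$. The paper's argument is thus self-contained (it does not invoke the explicit formula of Theorem 1) and yields a one-parameter family of upper bounds valid for all $a$, which is the mechanism reused in Section 2.2 to locate the critical line and in the replica symmetric Proposition for the spin glass; your argument is shorter and makes transparent \emph{why} these particular temperatures work (alignment of the effective fields), at the price of being contingent on Theorem 1 and on the standard CW variational formula. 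Note also that both proofs share the same degeneracy at $\bar m=0$ or $\bar n=0$ (the paper, which moreover sets $h_1=h_2=0$, does not discuss it); your explicit restriction to $\bar m,\bar n>0$ with the paramagnetic case recovered as a limit, where in fact any balanced choice of $a$ making both $\beta',\beta''\le1$ works, is a point where your write-up is more careful than the paper's.
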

\begin{proof}
We start setting the trial values for the inverse temperatures of the two monopartite models, as $\beta'=2\beta(1-\alpha)a^2$ and $\beta''=2\beta\alpha/a^2$, with $a$ a real parameter to be determined later, and we set the external fields to zero for the sake of clearness. It is
$$
Z_{N_1, N_2}(\beta)=\sum_{\sigma}\sum_{\tau}\exp\left(2\beta N\alpha(1-\alpha)mn\right),
$$
and since $2mn\leq m^2a^2+\frac{n^2}{a^2}$, $\forall a\neq0$, we have
\begin{eqnarray}
Z_{N_1, N_2}(\beta)&\leq&\sum_{\sigma}\sum_{\tau}\exp\left(2\beta N\alpha(1-\alpha)\frac{m^2a^2}{2}+2\beta N\alpha(1-\alpha)\frac{n^2}{2a^2}\right)\nonumber\\
&=&\sum_{\sigma}\exp\left(\beta' N_1\frac{m^2}{2}\right)\sum_{\tau}\exp\left(\beta'' N_2\frac{n^2}{2}\right)\nonumber\\
&=&Z_{N_1}(\beta')Z_{N_2}(\beta'')\nonumber,
\end{eqnarray}
thus we conclude, when $N\to\infty$
\begin{equation}\label{eq:conv<}
A(\alpha, \beta)\leq \alpha A^{CW}(\beta')+(1-\alpha)A^{CW}(\beta'')
\end{equation}
The inverse bound is proven noticing that
\begin{eqnarray}
\frac{Z_{N_1, N_2}(\beta)} {Z_{N_1}(\beta')Z_{N_2}(\beta'')}&=&\frac{\sum_{\sigma}\sum_{\tau}\exp\left(2\beta N\alpha(1-\alpha)mn-\frac{\beta'm^2}{2}-\frac{\beta''n^2}{2}\right)\exp\left( \frac{\beta'm^2}{2}+\frac{\beta''n^2}{2} \right)}{Z_{N_1}(\beta')Z_{N_2}(\beta'')}\nonumber\\
&=&\Omega\left[\exp\left( 2\beta N\alpha(1-\alpha)mn-\frac{\beta'm^2}{2}-\frac{\beta''n^2}{2} \right)\right]\nonumber\\
&\geq&\exp\left(\Omega\left[ 2\beta N\alpha(1-\alpha)mn-\frac{\beta'm^2}{2}-\frac{\beta''n^2}{2} \right]\right)\nonumber\\
&=&\exp\left( 2\beta N\alpha(1-\alpha)\bar m\bar n-\frac{\beta'\bar m^2}{2}-\frac{\beta''\bar n^2}{2} \right)\nonumber,
\end{eqnarray}
where we have denoted with $\Omega$ the joint state of the two monopartite systems. Now, bearing in mind the definition of $\beta'$ and $\beta''$ given at the beginning, we get
$$
\frac{Z_{N_1, N_2}(\beta)}{Z_{N_1}(\beta')Z_{N_2}(\beta'')}\geq e^{-2N\beta\alpha(1-\alpha)\left(a\bar m-\frac{\bar n}{a} \right)^2}
$$
and then, in thermodynamic limit
\begin{equation}\label{eq:conv.CW2}
A(\alpha, \beta)\geq\alpha A^{CW}(\beta')+(1-\alpha)A^{CW}(\beta'')-2\beta\alpha(1-\alpha)(\left(a\bar m-\frac{\bar n}{a} \right)^2)
\end{equation}
Now we notice that the extrema in (\ref{eq:conv<}) and
(\ref{eq:conv.CW2}),  respectively a minimum and a maximum, are
obtained with the choice $a^2=\frac{\bar n}{\bar m}$, that
completes the proof.
\end{proof}
It is worthwhile to remark that the two monoparty models here
are completely independent, in the sense that the order parameters
are given by the relations
\begin{eqnarray}
\bar m&=&\tanh (\beta' \bar m),\nonumber\\
\bar n&=&\tanh (\beta'' \bar n).\nonumber
\end{eqnarray}

\subsection{The free energy again: a maximum principle}
Our aim is now to front the mathematical study of this model with
the approach described in
\cite{sum-rules}\cite{barra0}\cite{io1}\cite{dibiasio}, based on a
mechanical interpretation of the interpolation method. The problem
of finding the free energy in the thermodynamic limit is here
translated in solving an Hamilton-Jacobi equation with certain
suitable boundary conditions, and an associated Burgers transport
equation for the order parameter of the model. In order to
reproduce this scheme, with the freedom of interpretation of the
label $t$ for the time and $x$ for the space, let us introduce now
the $(x,t)$-dependent interpolating partition function
\begin{eqnarray}
Z_N(x,t)&=& \sum_{\sigma}\sum_{\tau}\exp N \Big(t\alpha_N(1-\alpha_N)m_Nn_N+\nonumber\\
&+&\frac{(\beta-t)}{2}(a^2\alpha^2m_N^2+\left(\frac{1-\alpha}{a}\right)^2n_N^2)+x(a\alpha_N m_N-\frac{1-\alpha_N}{a}n_N)\nonumber\\
&+& h_1\alpha_N m_N+h_2(1-\alpha_N)n_N \Big)\nonumber,
\end{eqnarray}
such that the thermodynamical partition function of the model is
recovered when $t=2\beta$ and $x=0$. At this level $a$ is a free parameter to be determined later. We can go further and
explicitly define the function
\begin{equation}
\varphi_N(x,t)=\frac{1}{N}\log Z_N(x,t),
\end{equation}
that therefore is just the pressure of the model for a suitable
choice of $(x,t)$. Now, computing derivatives of $\varphi_N(x,t)$,
we notice that, putting $D_N=a\alpha_N m_N-\frac{1-\alpha_N}{a}n_N$, it is
\begin{eqnarray}
\partial_t\varphi_N(x,t)&=&-\frac{1}{2}\meanv{D^2_N}(x,t),\nonumber\\
\partial_x\varphi_N(x,t)&=&\meanv{D_N}(x,t),\nonumber\\
\partial^2_x\varphi_N(x,t)&=&\frac{N}{2}\left( \meanv{D_N^2}-\meanv{D_N}^2 \right).\nonumber
\end{eqnarray}
Thus we can build our differential problems trough an Hamilton-Jacobi
equation for $\varphi_N(x,t)$
\begin{equation}\label{eq:HJ-N-bip}
\left\{
\begin{array}{rclll}
&& \partial_t\varphi_N(x,t)+\frac{1}{2}(\partial_x\varphi_N(x,t))^2+\frac{1}{2N}\partial^2_{x}\varphi_N(x,t)= 0 &\:&\mbox{in }{\mathbb R}\times(0,+\infty)\\
&& \varphi_N(x,0)=\alpha_N A_{N_1}^{CW}(\beta', \alpha(h_1+x)) + (1-\alpha_N)
A_{N_2}^{CW}(\beta'', (1-\alpha)(h_2-x))&\:&\mbox{on  }{\mathbb R}\times
\{t=0\},
\end{array}
\right.
\end{equation}
where $A^{CW}_{N_1}$ is the pressure of the Curie-Weiss model made of
by $N_1$ $\sigma$ spins with inverse temperature $\beta'$, and $A^{CW}_{N_2}$ is the same referred to
the $N_2$ $\tau$ spin with inverse temperature $\beta''$, with $\beta'=2\beta a^2(1-\alpha)$ and $\beta''=2\beta a^{-2}\alpha$,
and trough a Burgers equation for the velocity field $D_N(x,t)$
\begin{equation}\label{eq:burgers-N-bip}
\left\{
\begin{array}{rclll}
&& \partial_t D_N(x,t)+D_N(x,t)\partial_x D_N(x,t)+\frac{1}{2N_1}\partial^2_{x}D_N(x,t)= 0&\:&\mbox{in }{\mathbb R}\times(0,+\infty)\\
&& D_N(x,0) = \alpha m(\beta, h_1+x)-(1-\alpha_N) n(\alpha_N\beta,
h_2-x)&\:&\mbox{on  }{\mathbb R}\times \{t=0\},
\end{array}
\right.
\end{equation}
This is true of course for every choice of the parameter $a$, that has the role of balancing the weights of the single party contributions. Since we have seen that the function $\varphi_N$ is decreasing in time, if we put $x=0$, with no external fields, we gain
$$
A_{N_1, N_2}(\beta)\leq \alpha_N A^{CW}_{N_1} (\beta')+(1-\alpha_N)A^{CW}_{N_2}(\beta'').
$$
If we take $a^2=\sqrt{\frac{1-\alpha}{\alpha}}$, such that it is $\beta'=\beta''=\bar\beta=2\beta\sqrt{\alpha(1-\alpha)}$, we have, in the infinite volume limit
$$
A(\beta)\leq \alpha A^{CW} (\bar\beta)+(1-\alpha)A^{CW}(\bar\beta),
$$
that easily give us the critical line of the bipartite model,
$2\beta\sqrt{\alpha(1-\alpha)}= 1$, obtained straightly by the
critical point of the two Curie-Weiss models, $\bar\beta=1$.
Anyway, for reasons that will become clear soon, hereafter it will
adopted the different value $a=1$.

\begin{remark}
We stress that the boundary condition in equation (\ref{eq:HJ-N-bip}) is always an upper bound for $\varphi_N$.
\end{remark}
In order to work out an explicit solution for the thermodynamic
limit of the pressure, in primis we notice that the main
difference with respect to the single party (namely the Curie
Weiss \cite{io1}) is the more delicate form of the boundary
conditions. In fact we have that interactions do not factorize
trivially (in a way independent by the size of the system). It is
\begin{equation}\label{eq:bordo1}
\varphi_N(x,0)=\alpha_NA^{CW}_{N_1}(\beta', h_1+x) +(1- \alpha_N)
A_{N_2}^{CW}(\beta'', h_2-x),
\end{equation}
however, it is known \cite{io1} how to get a perfect control on
the function on the r.h.s. of (\ref{eq:bordo1}), and we have
\begin{equation}\label{eq:bordo2}
\varphi_N(x,0)=\alpha A^{CW}(\beta', h_1+x)+(1-\alpha) A^{CW}(\beta'',
h_2-x)+\OO{N}.\end{equation} Evenly we have for the velocity field in $t=0$
$$
\alpha_N m_{N_1}(\beta', h_1+x)-(1-\alpha_N) n_{N_2}(\beta'', h_2-x)=\alpha m(\beta',
h_1+x)-(1-\alpha) n(\beta'', h_2-x)+\OO{\sqrt{N}}.
$$
\newline
We obtained an Hamilton-Jacobi equation for the free
energy with a vanishing dissipative term in the
thermodynamic limit, while the velocity field, that is the order parameter, satisfies a Burgers' equation with a mollifier
dissipative term.
\newline
We stress that our method introduces by itself the correct
order parameter, without imposing it by hands.
\begin{remark}
As the next condition on $\varphi_N(x,t)$ is tacitely required by
the following lemma, we stress that the function $D_N(x,t)$ is
bounded uniformly in $N, \alpha, \beta, h_1, h_2$, that implies
the function $\varphi_N(x,t)$ to be Lipschitz continuous.
\end{remark}

We can replace the sequence of differential problems with boundary
conditions dependent by $N$ with the same sequence of equations but
with obvious fixed boundary conditions, that is the well defined limiting
value of $\varphi_N$ and $D_N$ in $t=0$. To this purpose
it is useful the following
\begin{lemma}
The two differential problems
\begin{equation}\label{eq:HJ-N-bip_FIN}
\left\{
\begin{array}{rclll}
&& \partial_t\varphi_N(x,t)+\frac{1}{2}(\partial_x\varphi_N(x,t))^2+\frac{1}{2N_1}\partial^2_{x}\varphi_N(x,t)= 0 &\:&\mbox{in }{\mathbb R}\times(0,+\infty)\\
&& \varphi_N(x,0)=\alpha_N A_{N_1}^{CW}(\beta', h_1+x) + (1-\alpha_N)
A_{N_2}^{CW}(\beta'', h_2-x)=h_N(x)&\:&\mbox{on  }{\mathbb R}\times
\{t=0\},
\end{array}
\right.
\end{equation}
and
\begin{equation}\label{eq:HJ-N-bip_FIN}
\left\{
\begin{array}{rclll}
&& \partial_t\bar\varphi_N(x,t)+\frac{1}{2}(\partial_x\bar\varphi_N(x,t))^2+\frac{1}{2N_1}\partial^2_{x}\bar\varphi_N(x,t)= 0 &\:&\mbox{in }{\mathbb R}\times(0,+\infty)\\
&& \varphi_N(x,0)=\alpha A^{CW}(\beta', h_1+x) + (1-\alpha)
A^{CW}(\beta'', h_2-x)=h(x)&\:&\mbox{on  }{\mathbb R}\times
\{t=0\},
\end{array}
\right.
\end{equation}
are completely equivalent, \textit{i.e.} in thermodynamic limit
they have the same solution, $\varphi_N\to\varphi$ and
$\bar\varphi_N\to\varphi$ and it is
$$
|\varphi_N-\bar\varphi_N|\leq\OO{N}.
$$
\end{lemma}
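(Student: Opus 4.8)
The plan is to exploit that the two Cauchy problems in (\ref{eq:HJ-N-bip_FIN}) are governed by the \emph{same} Hamilton--Jacobi operator and differ only through their data at $t=0$, so that the entire statement reduces to a stability estimate for the solution map with respect to the boundary datum. First I would quantify the gap between the two initial profiles: by the control of the Curie--Weiss boundary term already recalled in (\ref{eq:bordo2}), one has $\sup_{x}|h_N(x)-h(x)|=\OO{N}$ uniformly on $\mathbb{R}$ (and the analogous $\OO{\sqrt N}$ bound for the velocity data). The target $|\varphi_N-\bar\varphi_N|\leq\OO{N}$ would then follow at once if the solution of the viscous Hamilton--Jacobi equation depends on its datum in a non-expansive way in the sup norm.

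To make this precise I would set $w_N=\varphi_N-\bar\varphi_N$ and subtract the two equations. Using $(\partial_x\varphi_N)^2-(\partial_x\bar\varphi_N)^2=(\partial_x\varphi_N+\partial_x\bar\varphi_N)\,\partial_x w_N$, the difference solves the \emph{linear} advection--diffusion equation
\[
\partial_t w_N+\tfrac{1}{2}\big(D_N+\bar D_N\big)\,\partial_x w_N+\tfrac{1}{2N_1}\partial_x^2 w_N=0,
\]
whose drift $\tfrac{1}{2}(D_N+\bar D_N)$ is bounded uniformly in $N,\alpha,\beta,h_1,h_2$ by the remark preceding the lemma. An equivalent and in fact cleaner route is the Cole--Hopf substitution $u_N=e^{N_1\varphi_N}$, which linearises the equation into a heat-type equation with diffusivity $1/N_1$; since $\bar u_N=e^{N_1\bar\varphi_N}$ solves the very same linear equation and $\|h_N-h\|_\infty=\OO{N}$ forces $e^{-c}\leq u_N/\bar u_N\leq e^{c}$ at $t=0$ with $c=N_1\|h_N-h\|_\infty=O(1)$, propagating this two-sided bound to all $t$ yields $|\varphi_N-\bar\varphi_N|=N_1^{-1}|\log(u_N/\bar u_N)|\leq\OO{N}$, with exactly the claimed rate.

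The delicate point --- and the step I expect to be the main obstacle --- is precisely the propagation of this sup-norm (respectively multiplicative) gap from $t=0$ to positive $t$. Because the dissipative term carries the sign that makes the equation anti-diffusive, the naive forward maximum principle is not directly available, so one must lean on the explicit finite-$N$ representation: the Cole--Hopf/Lax--Oleinik formula writes $\varphi_N$ as an extremal convolution of its datum against a kernel, and the non-expansiveness of that inf/sup convolution in $L^\infty$ is exactly what transports the $\OO{N}$ bound forward without amplification. The uniform boundedness and Lipschitz regularity of $D_N$ recorded in the remark above are what guarantee that the transport coefficient is controlled and that the representation is legitimate. Once $|\varphi_N-\bar\varphi_N|\leq\OO{N}$ is secured, the two sequences share the same accumulation points, whence $\varphi_N$ and $\bar\varphi_N$ converge to a common limit $\varphi$, which is the assertion of the lemma.
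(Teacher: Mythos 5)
Your proposal is correct and is essentially the paper's own argument: the paper proves the lemma precisely via your Cole--Hopf route, writing both solutions through the explicit Gaussian-kernel representation and bounding $|\varphi_N-\bar\varphi_N|=\frac{1}{N}\bigl|\log\bigl(\int\Delta\,e^{-NR_N}\,dy/\int\Delta\,dy\bigr)\bigr|$ by $\sup_y R_N(y)=\OO{N}$ with $R_N=|h-h_N|$, which is exactly the sup-norm non-expansiveness of the kernel map that you identify as the key propagation step. The obstacle you flag (no forward maximum principle, owing to the anti-diffusive sign of the dissipative term) is circumvented in the paper just as you suggest, by relying on the explicit finite-$N$ integral representation rather than on a PDE comparison principle.
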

\begin{proof}
By a Cole-Hopf transform, we can easily write the general form of
$\delta_N(x,t)=|\varphi_N(x,t)-\bar\varphi_N(x,t)|$ as
$$
\delta_N=\frac{1}{N}\left|\log\frac{\int_{-\infty}^{+\infty}dy\Delta(y,(x,t))e^{-NR_N(y)}}{\int_{-\infty}^{+\infty}dy\Delta(y,(x,t))}\right|,
$$
where we introduced the modified heat kernel
$\Delta(y,(x,t))=\sqrt{\frac{N}{2\pi t}}\exp\left(-N\left[
(x-y)^2/2t+h(y)\right]\right)$, and $R_N(y)=|h(y)-h_N(y)|$, with
$\lim_N NR_N<\infty$, $\forall\:y$. Now we notice that as it
certainly exists an $y^*$ such that
$$
\sup_y R_N(y)=y^*\qquad\mbox{ and }\qquad\lim_N NR_N(y^*)<\infty,
$$
hence it is
\begin{equation}
\delta_N(x,t)\leq \frac{1}{N} |\log e^{-NR_N(y^*)}| = \frac{1}{N}\left[NR_N(y^*)\right]\leq\OO{N},
\end{equation}
that completes the proof.
\end{proof}
Of course a similar result holds also for the Burgers' equation for the velocity field $D_N$.
\newline
Now the path to follow is clear: the problem of existence and
uniqueness of the thermodynamic limit is translated here into the
convergence of the viscous mechanical problem to the free one. We
can readapt a theorem that resumes a certain amount of results due
to Douglis, Hopf, Lax and Oleinik \cite{lax}\cite{io1} which
assures the existence and uniqueness of the solution to the free
problem:
\begin{theorem}
The pressure of the generalized bipartite ferromagnet, in
 the thermodynamic limit, exists, is unique and is given by:
\begin{equation}\label{eq:A_BIP}
A(\beta, \alpha, h_1, h_2)=-2\alpha(1-\alpha)\beta\bar{n}\bar{m} +
\alpha\log\cosh\left(h_1+2(1-\alpha)\beta \bar{n}\right)+(1-\alpha)\log\cosh\left(h_2+2\alpha\beta
\bar{m}\right),
\end{equation}
where, given the well defined magnetization for the generalized CW
model respectively for $\sigma$ and $\tau$, $m(\beta, h)$ and
$n(\beta, h)$, it is
\begin{eqnarray}
\tilde m(\beta, \alpha, h_1, h_2)&=& m(2\beta, h_1-D)\label{eq:Mtilde}\\
\tilde n(\beta, \alpha, h_1, h_2)&=&n(2\beta, h_2+D)\label{eq:Ntilde}.
\end{eqnarray}
Furthermore it is
\begin{equation}\label{eq:conv_bip}
|A_N(\beta, h_1, h_2)-A(\beta, \alpha, h_1, h_2)|\leq\OO{N}.
\end{equation}
\end{theorem}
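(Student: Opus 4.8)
The plan is to identify the claimed pressure as the value, at the physical point, of the unique viscosity solution of the inviscid limit of the Hamilton-Jacobi problem, and to extract the rate $\OO{N}$ from the Cole-Hopf representation already used in Lemma 1. First I would apply Lemma 1 to trade the genuine $N$-dependent datum $h_N$ for its limiting profile $h(x)=\alpha A^{CW}(\beta',h_1+x)+(1-\alpha)A^{CW}(\beta'',h_2-x)$ at a uniform cost $\OO{N}$, as in (\ref{eq:bordo2}); this isolates the only remaining $N$-dependence, the viscosity $\tfrac{1}{2N_1}$ in the equation. For the problem with fixed datum $h$, the Cole-Hopf transform $\bar\varphi_N(x,t)=-\tfrac1N\log\int_{-\infty}^{+\infty}dy\,\Delta(y,(x,t))$, with the modified heat kernel $\Delta$ of Lemma 1, linearizes the Hamilton-Jacobi equation exactly and puts $\bar\varphi_N$ in closed integral form at every finite $N$.

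Next I would let $N\to\infty$ by Laplace's method on this integral: since $\Delta(y,(x,t))=\sqrt{N/2\pi t}\,\exp(-N[(x-y)^2/2t+h(y)])$, the leading exponential order produces the Hopf-Lax-Oleinik formula $\varphi(x,t)=\inf_y\big[(x-y)^2/2t+h(y)\big]$, which is precisely the unique viscosity solution of $\partial_t\varphi+\tfrac12(\partial_x\varphi)^2=0$ with datum $h$; existence and uniqueness are handed to us by the Douglis-Hopf-Lax-Oleinik package quoted before the statement. Evaluating at the point recovered by the interpolation, $x=0$ and $t=2\beta$ (with $a=1$, so that $\beta'=2\beta(1-\alpha)$ and $\beta''=2\beta\alpha$), the stationarity condition for the minimizer $y^*$ reads $(x-y^*)/t=h'(y^*)$; at $x=0$ this fixes the velocity $D=h'(y^*)$, and recalling that $h'$ reproduces the initial velocity field built from the two CW magnetizations one recognizes, after restoring the original temperature and field variables, the implicit relations (\ref{eq:Mtilde})--(\ref{eq:Ntilde}) for $\tilde m,\tilde n$. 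Substituting $m(2\beta,h_1-D)$ and $n(2\beta,h_2+D)$ and simplifying the algebra then returns exactly (\ref{eq:A_BIP}), while the rate (\ref{eq:conv_bip}) comes from combining the $\OO{N}$ boundary replacement of Lemma 1 with the subleading Laplace correction to $\bar\varphi_N$, where the $\sqrt{N}$ in the kernel prefactor exactly cancels the $\sqrt{N}$ from the Gaussian width about $y^*$, leaving a correction of order $\OO{N}$.

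The main obstacle I expect lies in controlling the minimizer when the datum $h$ is not convex, which is the case in the low-temperature phase where the underlying CW pressures acquire a nonzero spontaneous magnetization: there the inviscid solution can develop shocks and the infimum need not be attained at a unique interior $y^*$. The delicate step is therefore to prove that the physical point $(0,2\beta)$ lies off the shock set, equivalently that the map $y\mapsto y+2\beta\,h'(y)$ is invertible at the relevant value, so that $y^*$ is unique and smooth in $(\beta,\alpha,h_1,h_2)$; this is exactly what makes the min-max of Theorem 1 collapse to the announced maximum principle and what singles out the admissible solution of the self-consistency system (\ref{eq:self1})--(\ref{eq:self2}) entering (\ref{eq:A_BIP}). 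Matching this unique fixed point to the magnetization relations, and verifying that Laplace's method applies at a nondegenerate isolated minimum, is the computational heart of the argument.
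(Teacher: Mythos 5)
Your outline retraces the paper's own route step by step: Lemma 1 to trade $h_N$ for the limiting datum $h$, Cole--Hopf to get the closed integral representation, Laplace's method to land on the Hopf--Lax formula and the Douglis--Hopf--Lax--Oleinik well-posedness package, stationarity of the minimizer at $(x,t)=(0,2\beta)$ to recover the self-consistency relations (this is the same as the paper's evaluation along characteristics, $y=x-tD(x,t)$), and Gaussian estimates for the $\OO{N}$ rate. Up to sign conventions (which you in fact handle more consistently than the text: your $\bar\varphi_N=-\frac1N\log\int\Delta$ gives $\inf_y[(x-y)^2/2t+h(y)]$, which is compatible with $\partial_t\varphi_N\leq 0$ and $\varphi(x,0)=h(x)$), this is essentially the paper's proof.

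The genuine problem is your diagnosis of the ``main obstacle,'' on which you then hang the unexecuted ``computational heart'' of the argument. You claim that in the low-temperature phase the datum $h(y)=\alpha A^{CW}(\beta',h_1+y)+(1-\alpha)A^{CW}(\beta'',h_2-y)$ fails to be convex, that shocks can form, and that one must prove the physical point $(0,2\beta)$ lies off the shock set. The premise is false: the Curie--Weiss pressure is convex in the external field at \emph{every} temperature, since $\partial^2_h A^{CW}_{N}(\beta,h)=N\left(\meanv{m^2}-\meanv{m}^2\right)\geq 0$ and convexity survives the limit $N\to\infty$; you are conflating the pressure as a function of the field with the free energy as a function of the magnetization (the double well). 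Spontaneous magnetization below criticality produces a \emph{corner} of $h$ (a jump of $h'$), not a loss of convexity. Consequently $y\mapsto (x-y)^2/2t+h(y)$ is uniformly convex for every $(x,t)$: the minimizer is unique, the map $y\mapsto y+t\,h'(y)$ is automatically strictly increasing (so the invertibility you single out as delicate is free), corners open into rarefaction fans rather than shocks, and the velocity field $D$ stays continuous --- there is no shock set to avoid in this parametrization. This convexity is precisely the property the paper invokes (calling it ``uniform concavity'' in its sign convention) to justify the Gaussian estimates, it is what turns the minimax of Theorem 1 into the pure maximum principle, and it is what makes the Laplace expansion nondegenerate and yields the claimed rate (at worst, a corner minimizer costs a logarithmic factor). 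So your constructive steps are sound, but the proposal is open exactly where the paper's proof is closed: the missing ingredient is to observe and exploit the convexity of $h$, not to devise a shock-avoidance argument for a non-convexity that never occurs.
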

\begin{proof}
Well known results about CW model (see for instance
\cite{barra0}\cite{io1}) give us the existence and the form of the
free solution. We know \cite{io1} that the free Burger's equation
can be solved along the characteristics
\begin{equation}\label{eq:traiettorie-generali_BIP}
\left\{
\begin{array}{rrl}
t&=&s\\
x&=&x_0+sD(x_0, 0),
\end{array}
\right.
\end{equation}
where
$$
D(x_0, 0)=\alpha m(2\beta(1-\alpha), \alpha(h_1+x_0))+(1-\alpha) n(2\beta\alpha, (1-\alpha)(h_2-x_0)),
$$
and it is
\begin{eqnarray}
D(x,t)&=&D(x_0(x,t),0) \\ \nonumber &=& \alpha m(2\beta(1-\alpha),
\alpha(h_1+x-tD(x_0,0)))+(1-\alpha) n(2\beta\alpha,
(1-\alpha)(h_2-x+tD(x_0,0))).
\end{eqnarray}
Then we can notice that
\begin{eqnarray}
m(\beta', h_1-x+tD(x_0,0))&=& \tanh\left(h_1-x+t(1-\alpha) n\right)\label{eq:self-M-bip},\\
n(\beta'',h_2+x-tD(x_0,0))&=&\tanh\left(h_2+x+t\alpha m\right)\label{eq:self-N-bip},
\end{eqnarray}
which coincide with (\ref{eq:Mtilde}) and (\ref{eq:Ntilde}) when
$x=0$ and $t=2\beta$.
At this point we know \cite{io1} that the minimum is taken for
$y=x-tD(x,t)$, such that we have
\begin{eqnarray}
\varphi(x,t)_{(x=0,t=2\beta)}&=&\Big[\frac{t}{2}D^2(x,t)-\frac{t}{2}\alpha^2m^2(\beta', h_1+x-tD(x_0,0))-\frac{t}{2}(1-\alpha)^2 n^2(\beta'', h_2-x \nonumber \\ &+& tD(x_0,0))
+ \alpha\log\cosh\left(h_1-x+m(2\beta(1-\alpha)+t\alpha)-t(1-\alpha) n\right)\nonumber\\
&+&(1-\alpha)\log\cosh\left(h_2+x+t\alpha m-n((1-\alpha)t-2\beta\alpha)\right)\Big]_{(x=0,t=2\beta)}\nonumber\\
&=&A(\beta, \alpha, h_1, h_2),\nonumber
\end{eqnarray}
where $A(\beta, \alpha, h_1, h_2)$ is  just given by
(\ref{eq:A_BIP}), bearing in mind the right definition of $\tilde
M$ and $\tilde N$. Now we must only prove the convergence of the
true solution to the free one. But equation (\ref{eq:conv_bip})
follows by standard techniques, because of the uniform concavity
of
$$
\frac{(x-y)^2}{2t}+\alpha A^{CW}(\beta, h_1+y)+(1-\alpha) A^{CW}(\beta,
h_2-y)
$$
with respect to $y$. In fact we have that, by a Cole-Hopf
transform, the unique bounded solution of the viscous problem is
$$
\varphi_N(x,t)=\frac{1}{N}\log\sqrt{\frac{N}{t}}\int
\frac{dy}{\sqrt{2\pi}}\exp\left[-N\left(
\frac{(x-y)^2}{2t}+\alpha A^{CW}(\beta, h_1+y)+(1-\alpha) A^{CW}(\beta,
h_2-y) \right)\right]
$$
and we have, by standard estimates of a Gaussian integral, that
$$
\left|\varphi(x,t)-\varphi_N(x,t)\right|\leq\OO{N},
$$
\textit{i.e.} also eq. (\ref{eq:conv_bip}) is proven.
\end{proof}
It is interesting to notice that here the minimax principle
discussed in the previous section has become a pure maximum
principle for the free energy, because of the natural choice of
the order parameter $D$, that is in our formalism the analogue of
the velocity field. Thus we have outlined the framework for
stating the next
\begin{proposition}
As an alternative to Theorem $1$, the free energy of the bipartite
ferromagnet can be obtained even within a classical extremization
procedure as it is uniquely given by the following variational
principle
\begin{eqnarray}
f(\alpha, \beta, h_1, h_2)&=&\max_{D}\Big[ - D^2+\alpha^2m^2(D)+(1-\alpha)^2n(D)^2\nonumber\\
&-&\frac{\alpha}{\beta}\log\cosh\left( h_1+\beta(1-\alpha) n(D
\right) - \frac{1-\alpha}{\beta}\log\cosh\left( h_2+\beta\alpha m(D)
\right) \Big].\nonumber
\end{eqnarray}
\end{proposition}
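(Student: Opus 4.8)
The plan is to obtain the Proposition directly from the Lax--Oleinik representation of the Hamilton--Jacobi solution $\varphi(x,t)$ built in Theorem 2, reparametrizing the variational problem from the spatial variable to the velocity field $D=\alpha m-(1-\alpha)n$ (the order parameter singled out by the mechanical analogy with the choice $a=1$). The starting remark is that, at the physical point $x=0,\ t=2\beta$, one has $A=\varphi(0,2\beta)$, and Theorem 2 shows that the extremum defining $\varphi$ is attained along the characteristic, i.e. at $y^{*}=x-tD$. Since at $x=0,\ t=2\beta$ this reads $y=-2\beta D$, the extremization over $y$ is in one-to-one correspondence with an extremization over the single parameter $D$, and passing from the pressure to the free energy through $f=-A/\beta$ is expected to turn the resulting principle into the pure maximum stated here, which is the collapse of the $\min\max$ of Theorem 1 announced in the remark preceding the Proposition.

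First I would carry out the change of variable explicitly. The kinetic term $\tfrac{(x-y)^{2}}{2t}$ evaluated at $x=0,\ t=2\beta$ equals $y^{2}/(4\beta)$, and under $y=-2\beta D$ it becomes proportional to $\beta D^{2}$; after the passage $f=-A/\beta$ this produces precisely the $-D^{2}$ contribution. Next I would insert the explicit Curie--Weiss pressure for each party, $A^{CW}(\beta^{\prime},H)=\log 2+\log\cosh(\beta^{\prime}\bar m+H)-\tfrac{\beta^{\prime}}{2}\bar m^{2}$ together with its self-consistency $\bar m=\tanh(\beta^{\prime}\bar m+H)$, using the boundary data $h(y)=\alpha A^{CW}(\beta^{\prime},h_1+y)+(1-\alpha)A^{CW}(\beta^{\prime\prime},h_2-y)$ with $\beta^{\prime}=2\beta(1-\alpha)$, $\beta^{\prime\prime}=2\beta\alpha$. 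The two $\log\cosh$ pieces then become the two hyperbolic terms appearing in the statement, with $m=\tilde m$, $n=\tilde n$ as in \eqref{eq:Mtilde}--\eqref{eq:Ntilde}, while the quadratic Curie--Weiss contributions $\tfrac{\beta^{\prime}}{2}m^{2}$ and $\tfrac{\beta^{\prime\prime}}{2}n^{2}$ feed the $\alpha^{2}m^{2}$ and $(1-\alpha)^{2}n^{2}$ terms. The algebraic heart of the matching is the elementary quadratic completion
\[
-D^{2}+\alpha^{2}m^{2}+(1-\alpha)^{2}n^{2}=2\alpha(1-\alpha)mn,\qquad D=\alpha m-(1-\alpha)n,
\]
which is exactly what converts the decoupled single-party energies back into the bilinear coupling $2\alpha(1-\alpha)\bar m\bar n$ of \eqref{eq:A_BIP}; comparison with \eqref{eq:A_BIP} then identifies the maximizer with the intersection point of \eqref{eq:self1}--\eqref{eq:self2} and closes the identification.

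As an independent check, and to make the maximum character manifest, I would also verify the Proposition by direct extremization of the trial functional $g(D)$ given by the bracket, in which $m(D)=m(2\beta,h_1-D)$ and $n(D)=n(2\beta,h_2+D)$ are the Curie--Weiss magnetizations of the two parties in their $D$--shifted fields. Differentiating $g$ and using $\partial_H\log\cosh=\tanh$ together with the CW relations, the stationarity condition $\partial_D g=0$ should collapse to $D=\alpha m(D)-(1-\alpha)n(D)$, i.e. to the coupled system \eqref{eq:self1}--\eqref{eq:self2}; then $\partial^{2}_D g<0$, inherited from the convexity of the single-party CW pressures in the external field already exploited in the proof of Theorem 2, gives concavity and hence a unique maximizer, after which $g$ evaluated there reproduces \eqref{eq:A_BIP} divided by $-\beta$.

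I expect the main obstacle to be, rather than conceptual, the reconciliation of the extremum type with the sign and normalization conventions. Concretely: pinning down, in the anti-diffusive Cole--Hopf and Laplace asymptotics of Theorem 2, whether $\varphi$ is selected by the convex (minimum) or concave (maximum) branch, so that after $f=-A/\beta$ both the sign in front of $D^{2}$ and the maximum character come out as stated; and carefully tracking the factors of two and the additive $\log 2$ constants between the $2\beta$--time normalization of the mechanical picture and the $\beta$--normalization of the free energy, so that the arguments of the hyperbolic cosines match those in the statement. Both points are routine once the conventions are fixed, but they are where essentially all the care is required.
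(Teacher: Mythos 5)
Your proposal is correct and takes essentially the same route as the paper, which in fact offers no separate proof of this Proposition: it is presented as an immediate corollary of Theorem 2, the pressure being the Lax--Oleinik extremum attained at $y=x-tD$, evaluated at $(x,t)=(0,2\beta)$, with the sign flip $f=-A/\beta$ turning the minimum principle for the pressure into the stated maximum principle for the free energy---precisely your derivation, down to the quadratic completion $-D^{2}+\alpha^{2}m^{2}+(1-\alpha)^{2}n^{2}=2\alpha(1-\alpha)mn$ that reconciles it with Theorem 1. The convention issues you flag are genuine rather than cosmetic (the paper's own displays are internally inconsistent on the factors of $2$, the additive $\log 2$, the $\alpha\leftrightarrow(1-\alpha)$ labelling of $\beta'$ and $\beta''$, and the sign inside the Cole--Hopf integral), so your insistence on pinning down those conventions is exactly where the remaining care is needed.
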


\section{Bipartite spin glasses}

Let us consider a set of $N_1$ i.i.d. random spin variables
$\sigma_i$, $i=1,..., N_1$ and let us consider also
another set of  i.i.d. random spin variable $\tau_{j}$,
$j=1,...,N_2=N-N_1$. We will consider for sake of simplicity only
dichotomic spin variables, although our scheme is easily
extensible to other spin distributions, symmetric and with compact
support. Therefore we have two distinct sets (or parties
hereafter) of different spin variables, and  we let them interact
through the following Hamiltonian:
\begin{equation}\label{eq:H1}
H_{N_1,N_2}(\sigma, \tau)=-\sqrt{\frac{2}{N_1+N_2}}\sum_{i=1}^{N_1}\sum_{j=1}^{N_2} \xi_{ij}\sigma_i\tau_j,
\end{equation}
where the $\xi_{ij}$ are also i.i.d. r.v., with $\mathbb{E}[\xi]=0$ and $\mathbb{E}[\xi^2]=1$,  \textit{i.e.} the quenched noise ruling the mutual interaction between parties. In particular we deal with a $\mathcal{N}(0,1)$ quenched disorder. It is then defined a bipartite model of spin glass where emphasis is given on its bipartite nature by neglecting self-interactions, mirroring the strategy outlined in the first part of the work.
\newline
For the sake of simplicity, as external fields in complex  systems
must be considered much more carefully with respect to the simple
counterparts, we are going to work out the theory neglecting them
at this stage.

Of course, once the Hamiltonian is given, it results defined the
partition function, the pressure and the free energy of the model,
as
\begin{eqnarray}
Z_{N_1, N_2}(\beta)&=&\sum_{\sigma,\tau}\exp\left(-\beta H_{N_1,N_2}(\sigma, \tau)\right),\\
A_{N_1, N_2}(\beta)&=&\frac{1}{N+K}\mathbb{E}_J \log Z_{N_1, N_2}(\beta),\\
f_{N_1, N_2}(\beta)&=&-\frac{1}{\beta}A_{N_1, N_2}(\beta).
\end{eqnarray}
We can define also the Boltzmann state for an observable function of the spin variables $z(\sigma, \tau)$:
$$
\omega_{N_1, N_2}(z)=Z^{-1}_{N_1,
N_2}(\beta)\sum_{\sigma,\tau}\left[z(\sigma, \tau)\exp\left(-\beta
H_{N_1,N_2}(\sigma, \tau)\right)\right].
$$
and, as in glasses we need to introduce replicas (equivalent
copies of the system with the same identical quenched disorder),
we can define even the Boltzmann product state as
$\Omega_{N_1,N_2}=\omega_{N_1,N_2}\times...\times\omega_{N_1,N_2}$, where the
amount of replicas can be specified time by time.
\begin{remark}
In analogy with the prescription introduced in  the normalization
of the bipartite ferromagnet, the factor $\sqrt{2}$ in the
Hamiltonian is put ad hoc in order to obtain for the balanced
bipartite spin glass $(\alpha=1/2)$ the same critical point of the
Sherrington- Kirkpatrick one party model, as will be clarified in
the next section.
\end{remark}

The main achievement of the theory would be a complete control of
the free energy, or the pressure, in the thermodynamic limit,
\textit{i.e.} for $N_1,N_2\to\infty$. We stress that two different
cases may arise: the first is that the size of one party grows
faster than the other; the second is that the two sizes grows in
the same way, such that is well defined the ratio $N_1/N\to
\alpha\in(0, 1)$ and $N_2/N\to (1-\alpha)\in(0, 1)$ again for
coherence with the strategy outlined in the first part of the work
and for a general higher interest in this case. We will adopt this
latter definition of thermodynamic limit, and thus the
thermodynamic functions depend by the additional parameter
$\alpha$ ruling the relative ratio among the parties:
$$
\lim_N A_{N_1,N_2}(\beta)=A(\alpha, \beta).
$$
We must stress that at the moment no rigorous proof of the
existence of such a limit is known and that there is a deep
connection among this limit and the one of the Hopfield model for
neural networks \cite{BG1}\cite{BGG1}. Finally we must introduce
overlap, that is correlation functions among replicas. It
naturally arises how in this model we have two kind of such a
quantity, one referred to each party; in fact we define
immediately
\begin{eqnarray}
q_{ab}&=&\frac{1}{N_1}\sum_{i}\sigma^a_i\sigma^b_i,\nonumber\\
p_{ab}&=&\frac{1}{N_2}\sum_{j}\tau^a_j\tau^b_j\nonumber.
\end{eqnarray}

\subsection{High temperature behavior}

We start our study of the model by characterizing the high temperature regime. It turns out that the system behaves like the annealed one in a wide region of the $(\alpha, \beta)$-plane, as stated by the following

\begin{theorem}
The pressure of the bipartite spin glass model does coincide with its annealed one
\begin{equation}\label{eq_A-ann}
A_A(\alpha, \beta)=\log 2 +\beta^2\alpha(1-\alpha)
\end{equation}
in the region of the $(\alpha, \beta)$ plane defined by
$2\beta^2\sqrt{\alpha(1-\alpha)}\leq1$.
\end{theorem}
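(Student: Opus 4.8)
The plan is to establish two matching bounds on the quenched pressure $A_N:=A_{N_1,N_2}(\beta)=\frac{1}{N}\mathbb{E}\log Z_{N_1,N_2}(\beta)$ and to identify their common value with $A_A$. One direction is free: by concavity of the logarithm, Jensen's inequality gives $A_N\le\frac{1}{N}\log\mathbb{E}Z_{N_1,N_2}=A_A$ for every $N$. Here the annealed pressure is a one-line Gaussian computation: using $\mathbb{E}e^{a\xi}=e^{a^2/2}$ factor by factor together with $\sigma_i^2=\tau_j^2=1$ one gets $\mathbb{E}Z_{N_1,N_2}=2^N\exp(\beta^2 N_1N_2/N)$, hence $A_A=\log2+\beta^2\alpha_N(1-\alpha_N)\to\log2+\beta^2\alpha(1-\alpha)$, which is the announced formula. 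Therefore $\limsup_N A_N\le A_A$ holds unconditionally, and the entire content of the theorem is the reverse inequality inside the stated region.

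For the lower bound I would run a second-moment argument. Expanding $Z^2$ over two replicas $(\sigma^1,\tau^1),(\sigma^2,\tau^2)$ and averaging the disorder, the exponent collapses onto the overlaps through the identity $(\sigma_i^1\tau_j^1+\sigma_i^2\tau_j^2)^2=2+2\sigma_i^1\sigma_i^2\tau_j^1\tau_j^2$, yielding $\mathbb{E}Z^2=\sum_{\sigma,\tau}\exp\big(2\beta^2N\alpha_N(1-\alpha_N)(1+q_{12}p_{12})\big)$ and hence the key ratio $\mathbb{E}Z^2/(\mathbb{E}Z)^2=\mathbb{E}\exp(\frac{2\beta^2}{N}ST)$, where the expectation is now over uniform spins and $S=\sum_{i=1}^{N_1}\sigma_i^1\sigma_i^2$, $T=\sum_{j=1}^{N_2}\tau_j^1\tau_j^2$ are two independent simple random walks. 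This is the feature specific to the bipartite geometry: the two overlaps enter multiplicatively rather than as a single square, so the estimate has to be organised by iterated conditioning.

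The core of the proof is bounding this expectation. Conditioning on $T$ and summing over $S$ produces $\cosh^{N_1}(2\beta^2T/N)\le\exp(2\alpha_N\beta^4T^2/N)$ via $\cosh x\le e^{x^2/2}$; linearising the leftover $T^2$ with an auxiliary standard Gaussian (Hubbard--Stratonovich), summing over $T$ and evaluating the resulting one-dimensional Gaussian integral gives $\mathbb{E}Z^2/(\mathbb{E}Z)^2\le(1-4\alpha_N(1-\alpha_N)\beta^4)^{-1/2}=:C$, which is finite exactly when $2\beta^2\sqrt{\alpha(1-\alpha)}<1$. I expect this threshold identification to be the main obstacle: the multiplicative coupling of the two overlaps forces the two-step conditioning above, and the Gaussian integral diverges precisely on the critical line $2\beta^2\sqrt{\alpha(1-\alpha)}=1$, so the boundary is not reached by this estimate and has to be recovered separately.

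With $\mathbb{E}Z^2\le C(\mathbb{E}Z)^2$ available, I would conclude by coupling a second-moment lower bound to concentration. Paley--Zygmund gives $\mathbb{P}(Z\ge\frac12\mathbb{E}Z)\ge 1/(4C)>0$, i.e. $\frac1N\log Z\ge A_A-\OO{N}$ with probability bounded below uniformly in $N$. On the other hand $\log Z$ is Lipschitz in the Gaussian field with squared constant $2\beta^2N\alpha_N(1-\alpha_N)$, so Gaussian concentration yields $\mathbb{P}(|\frac1N\log Z-A_N|>t)\le2\exp(-cNt^2)$, a tail summable in $N$; Borel--Cantelli then gives almost sure convergence of $\frac1N\log Z$ to $\lim A_N$, and the uniform positive-probability lower bound is incompatible with $\liminf_N A_N<A_A$. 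This forces $\liminf_N A_N\ge A_A$, which with the Jensen bound yields $A_N\to A_A$ (in particular the limit exists throughout the open region). Finally the closed constraint $2\beta^2\sqrt{\alpha(1-\alpha)}\le1$ is obtained from the open one by continuity, exploiting the convexity of $\beta\mapsto A_N(\beta)$ and the continuity of $A_A$.
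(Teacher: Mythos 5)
Your proposal is correct, and it shares the paper's outer skeleton: the same one-line Gaussian computation of $\mathbb{E}Z$, the Jensen upper bound, and a second-moment argument combined with concentration/Borel--Cantelli for the matching lower bound (the paper compresses this last mechanism into a citation of the ``standard method'' of Talagrand and of the Hopfield-model paper, whereas you spell out the Paley--Zygmund and Lipschitz-concentration steps explicitly; your version is the fully rigorous form of what is being invoked). The genuine difference is in how the key quantity $\mathbb{E}Z^2/(\mathbb{E}Z)^2$ is controlled. The paper performs the same gauge transformation $\sigma\to\sigma\sigma'$, $\tau\to\tau\tau'$ turning the overlaps into magnetizations, and then \emph{recognizes the resulting sum as the partition function of the bipartite ferromagnet of Section 2 at effective inverse temperature $\beta'=\beta^2$}; the condition $2\beta^2\sqrt{\alpha(1-\alpha)}\leq 1$ is then exactly the paramagnetic region of that ferromagnet, so the threshold is inherited conceptually from Theorem 1 and the closed region is covered in one stroke. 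You instead estimate the replica-coupled sum directly, by conditioning on one party, using $\cosh x\leq e^{x^2/2}$, linearising the leftover square with a Hubbard--Stratonovich Gaussian, and repeating, which yields the explicit bound $\left(1-4\alpha(1-\alpha)\beta^4\right)^{-1/2}$. This is self-contained (no appeal to the ferromagnetic section) and quantitative, but it degenerates on the critical line, forcing your additional continuity patch (monotonicity/convexity of $A_N$ in $\beta$ plus continuity of $A_A$) to reach the closed region --- a step the paper does not need, although its own claim that the ferromagnetic sum is ``bounded'' on the closed region, rather than merely subexponential, is itself delicate precisely on the line; your observation that $e^{o(N)}$ growth of the ratio suffices for the concentration argument would in fact repair that point as well. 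In short: the paper's route buys conceptual transparency (annealed region of the glass $=$ high-temperature region of an associated ferromagnet) and the boundary for free, yours buys an elementary, explicit, and logically independent estimate at the price of a separate boundary argument.
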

\begin{proof}
At first we calculate the annealing:
\begin{equation}\label{eq:EZ}
\mathbb{E}[Z_{N_1,N_2}(\beta)]=\exp(N)\left(\log2 + \beta^2\alpha(1-\alpha) \right),
\end{equation}
thus
\begin{equation}
A_A(\alpha, \beta) = \lim_{\begin{matrix} N_1,N_2 \\ \frac{N_1}{N}\to\alpha \end{matrix}} (N)^{-1}\log\mathbb{E}[Z_{N_1,N_2}(\beta)] = \log 2 +\beta^2\alpha(1-\alpha).\nonumber
\end{equation}
Now, following a standard method \cite{talbook}\cite{BG1}, we want to use the Borel-Cantelli lemma on $Z_{N_1,N_2}/\mathbb[Z_{N_1,N_2}]$. We evaluate the second moment of the partition function:
\begin{eqnarray}
\mathbb{E}[Z^2_{N_1,N_2}]&=&\mathbb{E}_{\xi}\sum_{\{\sigma,\tau\}} \exp\left( \beta\sqrt{\frac{2}{N}} \sum_{ij} \xi_ij(\sigma_i\tau_j\sigma'_i\tau'_j)\right)\nonumber\\
&=&\sum_{\{\sigma,\tau\}} \exp\left( \beta^2\frac{N_1 N_2}{N}(2+q_{12}p_{12}) \right)\nonumber\\
&=&e^{2(N)(\log2+\beta^2\alpha(1-\alpha))}\sum_{\{\sigma,\tau\}} \exp \left((N)2\beta^2\alpha(1-\alpha)q_{12}p_{12}\right),
\end{eqnarray}
where we neglected terms leading to an error for the pressure $O(N^{-2})$, replacing $\alpha_N$ with $\alpha$. Now we perform the transformation $\sigma\to\sigma\sigma'$ and $\tau\to \tau\tau'$ in order to get $q_{12}\to m$ and $p_{12}\to n$. Thus we have
\begin{eqnarray}
\frac{\mathbb{E}[Z^2_{N_1,N_2}]}{\mathbb{E}[Z_{N_1,N_2}]^2}&=&\sum_{\{\sigma,\tau\}} \exp \left((N_1+N_2)2\beta^2\alpha(1-\alpha)mn\right)  \nonumber\\
&=&\sum_{\{\sigma,\tau\}} \exp \left((N_1+N_2)\beta'\alpha(1-\alpha)mn\right),\nonumber\\
\end{eqnarray}
with $\beta'=\beta^2$. The last term, as we have seen in the
previous section  about bipartite ferromagnetic models, is bounded
for $2\beta'\sqrt{\alpha(1-\alpha)}\leq 1$, \textit{i.e.}
$2\beta^2\sqrt{\alpha(1-\alpha)}\leq1$, that completes the proof.
\end{proof}

We notice that this is a result slightly different with respect to
the Hopfield model \cite{BG1}. In fact we have that the annealed
free energy and the true one are the same still at small
temperatures, depending on the different weights assumed by the
parties (that are of course ruled by $\alpha$). This is reflected
by the symmetry of the high temperature region with respect to the
line $\alpha=1/2$. The following argument is going to clarify this
point:
\newline
Given respectively a $N_1\times N_1$ and a $N_2\times N_2$ random matrix $J$ and $J'$, with both $J_{ij}$ and $J'_{ij}$ normal distributed random variables for every $i,j$, we introduce the interpolating partition function
$$
Z_{N_1,N_2}(\beta, t)=\sum_{\sigma, \tau} e^{\left(\beta\sqrt{\frac{2t}{N}}\sum^{N_1,N_2}_{ij}\xi_{ij}\sigma_i\tau_j+a^2\beta\sqrt{\frac{2(1-t)}{N}}\sum^{N_1}_{ij}J_{ij}\sigma_i\sigma_j+\frac{\beta}{a^2}\sqrt{\frac{2(1-t)}{N}}\sum^{N_2}_{ij}J'_{ij}\tau_i\tau_j\right)},
$$
where $a$ is a parameter to be determined a posteriori.
Putting $\beta'=\beta a^2\sqrt{2\alpha}$ and $\beta''=\beta a^{-2}\sqrt{2(1-\alpha)}$, and neglecting terms vanishing when $N$ grows to infinity, we can rewrite the latter expression as
$$
Z_{N_1,N_2}(\beta, \beta', \beta'', t)=\sum_{\sigma, \tau} e^{\left(\beta\sqrt{\frac{2t}{N}}\sum^{N_1,N_2}_{ij}\xi_{ij}\sigma_i\tau_j+\beta'\sqrt{\frac{(1-t)}{N_1}}\sum^{N_1}_{ij}J_{ij}\sigma_i\sigma_j+\beta''\sqrt{\frac{(1-t)}{N_2}}\sum^{N_2}_{ij}J'_{ij}\tau_i\tau_j\right)}.
$$
Now we introduce the function
$$
\phi_{N_1,N_2}(t,\beta, \beta', \beta'')=\frac{1}{N}\mathbb{E}\log Z_{N_1,N_2}(\beta, \beta', \beta'', t)+\frac{t}{4}\left(\beta'^2\alpha+\beta''^2(1-\alpha)+4\beta^2\alpha(1-\alpha)\right).
$$
It is easily verified that
\begin{equation}\label{eq:phi_Netstr}
\left\{
\begin{array}{rclll}
\phi_{N_1,N_2}(t=0)&=&\alpha A_{N_1}^{SK}(\beta')+(1-\alpha)A_{N_2}^{SK}(\beta'')\\
\phi_{N_1,N_2}(t=1)&=&A_{N_1,N_2}(\beta)+\frac{1}{4}\left(\beta'^2\alpha+\beta''^2(1-\alpha)-4\beta^2\alpha(1-\alpha)\right)\\
\end{array}
\right.
\end{equation}
where, of course, $A^{SK}$ is the pressure for the Sherrington-Kirkpatrick model, that is, the only one party model. Furthermore we can take the derivative in $t$ and obtain
$$
\frac{d}{dt}\phi_{N_1+N_2}=\meanv{\left( \beta'\sqrt{\alpha}q_{12}-\beta''\sqrt{1-\alpha}p_{12} \right)^2}\geq0,
$$
since $2\beta'\beta''\sqrt{\alpha(1-\alpha)}=4\beta\alpha(1-\alpha)$. Hence we get the following bound for the pressure
\begin{equation}\label{eq:ANK>ANAK-1}
A_{N_1,N_2}(\beta)\geq \alpha A_{N_1}^{SK}(\beta')+(1-\alpha)A_{N_2}^{SK}(\beta'')-\frac{1}{4}\left(\beta'^2\alpha+\beta''^2(1-\alpha)+4\beta^2\alpha(1-\alpha)\right).
\end{equation}
Now we can fix $a^2$ in such a way that $\beta'=\beta''=\bar\beta$. As a consequence, it results $a^4=\sqrt{(1-\alpha)/\alpha}$ and $\bar\beta^2=2\beta^2\sqrt{\alpha(1-\alpha)}$, and the formula (\ref{eq:ANK>ANAK-1}) becomes
\begin{equation}\label{eq:ANK>ANAK-2}
A_{N_1,N_2}(\beta)\geq \alpha A_{N_1}^{SK}(\bar\beta)+(1-\alpha)A_{N_2}^{SK}(\bar\beta)-\frac{\bar\beta^2}{4}+\beta^2\alpha(1-\alpha).
\end{equation}
Thus the pressure is always greater than the convex sum of the single party Sherrington-Kirkpatrick pressure. The extra term is build in such a way we get an equality in the annealed region. In fact we have, if $\bar\beta\leq1$, \textit{i.e.} $2\beta^2\sqrt{\alpha(1-\alpha)}\leq1$, that in thermodynamic limit both $A_{N_1}^{SK}$ and $A_{N_2}^{SK}$ are $\log2 + \bar\beta^2/4$, and therefore we get
$$
\log2+\beta^2\alpha(1-\alpha)\geq A(\alpha,\beta)\geq \log2+\beta^2\alpha(1-\alpha),
$$
where, as usual, the upper bound is given by the Jensen inequality.

\subsection{Replica symmetric free energy}

In order to obtain an explicit expression for the free energy
density (or equivalently the pressure $A(\alpha,\beta)$), we apply
the double stochastic stability technique recently developed in
\cite{BGG1}. In a nutshell the idea is to perturb stochastically
both the parties via random perturbations; these are coupled with
scalar parameters to be set a fortiori in order to get the desired
level of approximation. With these perturbations the calculations
can be reduced to a sum of one body problems via a suitable sum
rule for the free energy; by the latter, the replica symmetric
approximation can be obtained straightforwardly by neglecting the
fluctuations of the order parameters.
\newline
Concretely we introduce the following interpolating partition function, for $t\in[0,1]$
\begin{eqnarray}\label{inter}
Z_{N_1,N_2}(t) &=& \sum_{\sigma}\sum_{\tau}\exp(\sqrt{t}\frac{\sqrt{2}\beta}{\sqrt{N}}\sum_{i
j}^{N_1,N_2}\xi_{i,j} \sigma_i \tau_{j})\cdot \\
\nonumber &\cdot& \exp(\sqrt{1-t}[\beta \sqrt{2(1-\alpha)\bar p}
\sum_i^{N_1} \eta_i \sigma_i + \beta \sqrt{2\alpha \bar q} \sum_{j}^{N_2}
\tilde{\eta}_{j} \tau_{j}]),
\end{eqnarray}
where $\eta, \tilde{\eta}$ are stochastic perturbations, namely
i.i.d. random variables $\mathcal{N}[0,1]$, whose averages are
still encoded into $\mathbb{E}$, and, so far,  $\bar q$, $\bar p$
are Lagrange multipliers to be determined later.
\newline
Now we introduce the interpolating function
$$
A_{N_1,N_2}(t)=\frac{1}{N} \mathbb{E} \log Z_{N_1,N_2}(t)+(1-t)\alpha(1-\alpha)\beta^2(1-\bar q)(1-\bar p).
$$
It is easily seen that at $t=1$ we recover the original pressure $A(\alpha,\beta)$, while for $t=0$ we obtain
a factorized one-body problem:
\begin{equation}\label{eq:A_NK-etstremi}
\left\{
\begin{array}{rclll}
\lim_N A_{N_1,N_2}(t=1)&=&A_{N_1,N_2}(\beta),\\
\lim_N A_{N_1,N_2}(t=0)&=& A_0(\alpha,\beta)+\alpha(1-\alpha)\beta^2(1-\bar q)(1-\bar p),
\end{array}
\right.
\end{equation}
with
\begin{eqnarray} \nonumber
A_0(\alpha, \beta) &=& \frac{1}{N} \mathbb{E} \log \sum_{\sigma}
\exp(\beta\sqrt{2(1-\alpha)\bar p}\sum_i \eta_i \sigma_i) \nonumber\\
&+& \frac{1}{N} \mathbb{E}
\log \sum_{\tau} \exp (\beta \sqrt{2\alpha \bar q}
\sum_{j} \tilde{\eta}_{j} \tau_{j}) \\
\nonumber &=& \ln 2 + \alpha\mathbb{E}_g \log \cosh \Large(
g\beta\sqrt{2(1-\alpha)\bar{p}} \Large) + (1-\alpha) \mathbb{E}_g
\log \cosh \Large( g\beta\sqrt{2\alpha\bar{q}} \Large),
\end{eqnarray}
where $\mathbb{E}_g$ indicates the expectation with respect to the $\mathcal{N}(0,1)$ r.v. $g$.
\newline
Now we must evaluate the $t$-derivative of $A_{N_1,N_2}(t)$ in order
to get a sum rule, namely \be A(t=1) = A_0(\alpha,\beta) +
\int_0^1 dt \left(\frac{d}{dt} A(t)\right). \ee Denoting via
$\langle \rangle_t$ the extended Boltzmann measure encoded into
the structure (\ref{inter}) -that reduces to the standard one for
$t=1$ as it should-, we get three terms by deriving the four
contributions into the extended Maxwell-Boltzmann exponential, that
we call $\mathcal{A}, \mathcal{B}, \mathcal{C}$ and follow:

\begin{eqnarray}
\mathcal{A} &=& \frac1N
\mathbb{E}\frac{\beta}{2\sqrt{tN}}\sum_{ij}\xi_{ij}\omega(\sigma_i
\tau_{j}) =\alpha(1-\alpha) \beta^2 \left( 1 -\langle q_{12}p_{12} \rangle_t \right), \\
\mathcal{B} &=& \frac{-1}{N}\mathbb{E}\frac{\beta\sqrt{2(1-\alpha) \bar p}}{2\sqrt{1-t}}\sum_i
\eta_i\omega(\sigma_i) = -\frac{\alpha(1-\alpha)}{2}\beta^2\bar p\left( 1 -\langle q_{12} \rangle_t\right), \\
\mathcal{C} &=&
\frac{-1}{N}\mathbb{E}\frac{\beta\sqrt{2\alpha \bar q}}{2\sqrt{1-t}}\sum_{j}\tilde{\eta}_{j}\omega(\tau_{j})
= -\frac{\alpha(1-\alpha)}{2}\beta^2\bar q \left( 1 - \langle p_{12}\rangle_t \right).
\end{eqnarray}
So we can build the $t$-streaming of the interpolant $A_{N_1,N_2}(t)$ as
\begin{eqnarray}
\frac{d}{dt} A_{N_1,N_2}(t) &=& \mathcal{A}+\mathcal{B}+\mathcal{C}-\alpha(1-\alpha)\beta^2(1-\bar q)(1-\bar p)\nonumber\\
&=&\alpha(1-\alpha) \beta^2 \left( 1 -\langle q_{12}p_{12} \rangle_t \right)-\frac{\alpha(1-\alpha)}{2}\beta^2\bar p\left( 1 -\langle q_{12} \rangle_t\right)+\nonumber\\
&-&\frac{\alpha(1-\alpha)}{2}\beta^2\bar q \left( 1 - \langle p_{12}\rangle_t \right)-\alpha(1-\alpha)\beta^2(1-\bar q)(1-\bar p)\nonumber\\
&=&-\meanv{(\bar q-q_{12})(\bar p-p_{12})}_t.
\end{eqnarray}
Now we stress that the Lagrange multipliers $\bar q$ and $\bar p$
can be understood here as trial values for the order parameters.
According to this point of view, the replica symmetric condition,
\textit{i.e.} the request that the overlaps do not fluctuate, is
equivalent to impose $\meanv{(\bar q - q_{12})(\bar p -
p_{12})}_t=0$, bringing us to conclude that in RS regime
$A_{N_1,N_2}(t)$ is a steady function of $t$, and then
$$
A_{N_1,N_2}(t=1)=A_{N_1,N_2}(\beta)=A_{N_1,N_2}(t=0)= A_0(\alpha, \beta)+\alpha(1-\alpha)\beta^2(1-\bar q)(1-\bar p),
$$
that is, in the thermodynamic limit
\begin{eqnarray}
\bar A(\bar{p},\bar{q},\alpha, \beta)&=&\ln 2 + \alpha\mathbb{E}_g
\log \cosh
\big( g\beta\sqrt{2(1-\alpha)\bar{p}}\big) +\nonumber\\
&+& (1-\alpha) \mathbb{E}_g \log \cosh
\Large( g\beta\sqrt{2\alpha\bar{q}} \Large)+\alpha(1-\alpha)\beta^2(1-\bar q)(1-\bar p).\label{eq:A_trial}
\end{eqnarray}
Now we follow the same considerations exploited  in \cite{BGG1}.
Indeed, the  last expression holds barely for every possible
choice of the trial values $\bar q$ and $\bar p$ of the order
parameters $\meanv{q_{12}}$ and $\meanv{p_{12}}$. Our purpose is
then to fix the right value of $\bar q$ and $\bar p$, imposed by
the RS condition $\meanv{(\bar q - q_{12})(\bar p - p_{12})}=0$.
In primis we notice that the trial function $\bar A$, as a
function of the trial order parameters $\bar q$, $\bar p$, is
uniformly concave with respect to $\bar p$. In fact it  is easily
seen that
$$
\partial_{\bar p} \bar A(\bar q, \bar p, \alpha,\beta)=\alpha(1-\alpha)\beta^2\left( \bar q-\mathbb{E}_g\tanh^2 \left(g\beta\sqrt{2(1-\alpha)\bar p}\right) \right),
$$
and since $\mathbb{E}_g\tanh^2 \left(g\beta\sqrt{2(1-\alpha)\bar
p}\right)$ is  increasing in $\bar p$ we have the assertion.
Furthermore, for any fixed $\bar q$, the function $\bar A$ takes
its maximum value where the derivative vanishes, that defines
implicitly a special value for $\bar p(\bar q)$:
$$
\bar q=\mathbb{E}_g\tanh^2 \left(g\beta\sqrt{2(1-\alpha)\bar p(\bar q)}\right).
$$
Of course we have that $\bar p$ is an increasing function of $\bar
q$, with $\bar p(0)=0$.  Now we are concerned about $\bar A$ at a
fixed level set, \textit{i.e.} $A(\bar p(\bar q), \bar q )$, and
state that it is convex in $\bar q$. In fact it is easily seen
from the last formula that $\bar p(\bar q)/\bar q$ is an
increasing function of $\bar q$, but of course
$\mathbb{E}_g\tanh^2\left(g\beta\sqrt{2\alpha\bar q}\right)$ is
strictly decreasing, thus
$$
\partial_{\bar q} \bar A(\bar q, \bar p) =\alpha(1-\alpha)\beta^2\left( \bar p(\bar q)-\mathbb{E}_g\tanh^2 \left(g\beta\sqrt{2\alpha\bar q}\right) \right)
$$
is increasing in $\bar q$, that is equivalent to convexity in such
a variable. The last equation  specifies the right value of the
trial order parameter $\bar q$ in the RS approximation. Thus the
replica symmetric approximation of the pressure results uniquely
defined by the minimax principle:
\begin{theorem}
The replica symmetric free energy of the bipartite spin glass
model is uniquely defined by the following variational principle:
\begin{equation}
A^{RS}(\bar{p},\bar{q},\alpha, \beta)=\min_{\bar q}\max_{\bar p}
\bar A(\bar p, \bar q, \alpha, \beta),
\end{equation}
where
\begin{eqnarray}
\bar A(\bar p, \bar q, \alpha, \beta) &=& \ln 2 + \alpha\mathbb{E}_g
\log \cosh\big( g\beta\sqrt{2(1-\alpha)\bar{p}}\big) +\nonumber\\
&+& (1-\alpha) \mathbb{E}_g \log \cosh\Large(
g\beta\sqrt{2\alpha\bar{q}} \Large)+\alpha(1-\alpha)\beta^2(1-\bar
q)(1-\bar p)\label{eq:A_RS},
\end{eqnarray}
whose the saddle point is reached at the intersection of the
following two curves in the $(\alpha, \beta)$ plane
\begin{eqnarray}
\bar{q}(\alpha, \beta) &=& \mathbb{E}_g \tanh^2
(g\beta\sqrt{2(1-\alpha)\bar{p}(\alpha, \beta)})\label{eq:bar-q} \\
\bar{p}(\alpha, \beta) &=& \mathbb{E}_g \tanh^2
(g\beta\sqrt{2\alpha\bar{q}(\alpha, \beta)})\label{eq:bar-p}.
\end{eqnarray}
\end{theorem}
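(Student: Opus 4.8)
The plan is to assemble the double stochastic stability apparatus already built above, so that the statement follows from one sum rule plus a two-stage extremization. First I would integrate the $t$-streaming: since $\frac{d}{dt}A_{N_1,N_2}(t)=-\meanv{(\bar q-q_{12})(\bar p-p_{12})}_t$ and the endpoints are fixed by \eqref{eq:A_NK-etstremi}, the fundamental theorem of calculus gives, for \emph{every} choice of the trial parameters $\bar p,\bar q$,
\[
A_{N_1,N_2}(\beta)=A_0(\alpha,\beta)+\alpha(1-\alpha)\beta^2(1-\bar q)(1-\bar p)-\int_0^1\meanv{(\bar q-q_{12})(\bar p-p_{12})}_t\,dt.
\]
The replica symmetric level is then defined by suppressing the overlap fluctuations, i.e. by setting the integral remainder to zero; in the thermodynamic limit this identifies the candidate pressure with the functional $\bar A(\bar p,\bar q,\alpha,\beta)$ of \eqref{eq:A_RS}, still for arbitrary $\bar p,\bar q$. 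This is exactly the content recorded in \eqref{eq:A_trial}, so the whole weight of the theorem falls on the extremization that selects the correct $\bar p,\bar q$.

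Second, I would carry out that extremization in the asymmetric order dictated by the geometry of $\bar A$. At fixed $\bar q$, the sign of $\partial_{\bar p}\bar A$ together with the monotonicity of $\mathbb{E}_g\tanh^2(g\beta\sqrt{2(1-\alpha)\bar p})$ in $\bar p$ shows $\bar A$ is strictly concave in $\bar p$, hence possesses a unique maximizer $\bar p(\bar q)$ solving \eqref{eq:bar-p}. Substituting this maximizer and using the envelope identity (the $\bar p$-partial vanishes at $\bar p(\bar q)$, so the reduced function $\psi(\bar q):=\bar A(\bar p(\bar q),\bar q)$ has derivative $\psi'(\bar q)=\partial_{\bar q}\bar A=\alpha(1-\alpha)\beta^2(\bar p(\bar q)-\mathbb{E}_g\tanh^2(g\beta\sqrt{2\alpha\bar q}))$), I would show $\psi$ is strictly convex in $\bar q$ and therefore has a unique minimizer solving \eqref{eq:bar-q}. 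Combining the two stages yields $A^{RS}=\min_{\bar q}\max_{\bar p}\bar A$, with the saddle pinned down as the unique intersection of the two self-consistency curves.

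The main obstacle is the convexity of $\psi$ in $\bar q$, which is what actually fixes the order of the extrema as $\min_{\bar q}\max_{\bar p}$ rather than the reverse and which secures uniqueness of the saddle. Here $\psi'$ is a difference of two competing terms, $\bar p(\bar q)$ and $\mathbb{E}_g\tanh^2(g\beta\sqrt{2\alpha\bar q})$, both increasing in $\bar q$, so monotonicity of $\psi'$ is not automatic and must be extracted from the precise interplay between the self-consistency map and the growth of $\bar p(\bar q)$ (in practice via the monotonicity of the ratio $\bar p(\bar q)/\bar q$). Making this step fully rigorous, rather than heuristic, is the delicate point. A secondary caveat I would flag explicitly is that the fluctuation-suppression step is an approximation: one controls only the sign-indefinite remainder $\int_0^1\meanv{(\bar q-q_{12})(\bar p-p_{12})}_t\,dt$, so the displayed variational formula is the replica symmetric pressure, valid once the overlaps self-average, and the theorem is to be read at that level.
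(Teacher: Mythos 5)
Your proposal follows essentially the same route as the paper: the double stochastic stability sum rule with the remainder $\int_0^1\meanv{(\bar q-q_{12})(\bar p-p_{12})}_t\,dt$ suppressed at the replica symmetric level, followed by the two-stage extremization using concavity in $\bar p$, the envelope identity for the reduced function, and quasi-convexity in $\bar q$ extracted from the monotonicity of the ratio $\bar p(\bar q)/\bar q$ against the decreasing ratio $\mathbb{E}_g\tanh^2\bigl(g\beta\sqrt{2\alpha\bar q}\bigr)/\bar q$ --- which is precisely the paper's argument, stated at the same level of rigor. The only slip is notational: the maximizer in $\bar p$ at fixed $\bar q$ satisfies the paper's equation (\ref{eq:bar-q}) and the minimizer in $\bar q$ satisfies (\ref{eq:bar-p}), not the reverse as your cross-references suggest, though the formulas you actually write are correct.
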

We can go further and put some constraints, imposing the two
curves to intersect away from $(\bar p=0, \bar q=0)$. Hence we
must have a precise relation among the slopes of such two curves
near the origin, \textit{i.e.} $\lim_{\bar q\to0}\bar p(\bar
q)/\bar q\geq\lim_{\bar p\to0} \bar p/\bar q(\bar p)$; but since
$\lim_{\bar q\to0}\bar p(\bar q)/\bar q=\alpha \beta^2$ and
$\lim_{\bar p\to0} \bar p/\bar q(\bar p)=1/\beta^2(1-\alpha)$, the
latter inequality simply leads us to conclude that only trivial
intersection point are possible for
$4\beta^4\alpha(1-\alpha)\leq1$. Therefore, we can resume all
these results in the following
\begin{proposition}
In the thermodynamic limit, it exists and it is unique the replica
symmetric pressure of the bipartite spin glass, given by
(\ref{eq:A_RS}). Furthermore the region of the $(\alpha,\beta)$
plane such that the (\ref{eq:bar-q}) and (\ref{eq:bar-p})  have
only the trivial solutions is characterized by
$4\beta^4\alpha(1-\alpha)\leq1$.
\end{proposition}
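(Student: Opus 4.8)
The plan is to read this Proposition as a bookkeeping statement that collects the minimax structure of the previous theorem and then adds one genuinely new ingredient, the linearization of the self-consistency system near the origin. For the first claim --- existence and uniqueness of the replica symmetric pressure --- I would not restart the computation. The sum rule built from the interpolating partition function \eqref{inter} already writes the finite-volume pressure as the factorized $t=0$ functional plus $\int_0^1 dt\,(-\meanv{(\bar q-q_{12})(\bar p-p_{12})}_t)$, and under the replica symmetric ansatz this integrand vanishes, so $A_{N_1,N_2}$ equals the trial functional $\bar A(\bar p,\bar q,\alpha,\beta)$ of \eqref{eq:A_RS} for \emph{every} choice of the multipliers. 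The value is then pinned down by the optimization, and uniqueness of the optimizer is precisely the content of the uniform concavity in $\bar p$ at fixed $\bar q$ and the convexity in $\bar q$ of the restricted functional $\bar A(\bar p(\bar q),\bar q)$ that were verified just before the preceding theorem. Hence $A^{RS}=\min_{\bar q}\max_{\bar p}\bar A$ is attained at a single point, which is what ``exists and is unique'' means here.

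For the second claim I would recast \eqref{eq:bar-q}--\eqref{eq:bar-p} as two curves through the origin in the $(\bar p,\bar q)$ quadrant and count their positive intersections. Writing $F(\bar p)=\mathbb{E}_g\tanh^2(g\beta\sqrt{2(1-\alpha)\bar p})$ for the right-hand side of \eqref{eq:bar-q} and $G(\bar q)=\mathbb{E}_g\tanh^2(g\beta\sqrt{2\alpha\bar q})$ for that of \eqref{eq:bar-p}, both are smooth, strictly increasing, vanish at $0$, and --- crucially --- are concave on $[0,\infty)$: indeed $y\mapsto\tanh^2(\sqrt y)$ is concave, and the Gaussian average together with the linear dependence of the argument on $\bar p$ (resp.\ $\bar q$) preserves concavity. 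The composed map $F\circ G$ is therefore increasing and concave with $(F\circ G)(0)=0$, so the standard dichotomy for concave self-maps applies: the fixed-point equation $\bar q=F(G(\bar q))$ has no positive solution when $(F\circ G)'(0)\le 1$ and exactly one when $(F\circ G)'(0)>1$.

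The remaining step is the slope computation, which is the linearization already sketched before the statement. Using $\tanh^2(x)=x^2+O(x^4)$ and $\mathbb{E}_g[g^2]=1$ gives $F(\bar p)\sim 2(1-\alpha)\beta^2\,\bar p$ and $G(\bar q)\sim 2\alpha\beta^2\,\bar q$ near the origin, so $(F\circ G)'(0)=F'(0)\,G'(0)=4\alpha(1-\alpha)\beta^4$. The threshold $4\beta^4\alpha(1-\alpha)=1$ then separates the two regimes, and for $4\beta^4\alpha(1-\alpha)\le1$ only the trivial solution $(\bar p,\bar q)=(0,0)$ survives, as claimed; this is exactly the tangent inequality $\lim_{\bar q\to0}\bar p(\bar q)/\bar q\ge\lim_{\bar p\to0}\bar p/\bar q(\bar p)$ recorded above.

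I expect the genuine obstacle to be the \emph{global} uniqueness in the second claim rather than the local slope count: comparing tangents at the origin only controls the behavior near $(0,0)$, and excluding a nontrivial crossing throughout the subcritical region really requires the concavity of $F$, $G$, and hence of $F\circ G$, so that the graph of the concave composed map cannot re-cross the diagonal after leaving the origin below it. I would therefore make the concavity of $y\mapsto\tanh^2(\sqrt y)$ explicit rather than merely invoking monotonicity. A secondary delicate point is purely a matter of phrasing: ``existence'' of the replica symmetric pressure here is the well-posedness of the variational formula \eqref{eq:A_RS}, not the existence of the thermodynamic limit of the \emph{true} pressure, which was left open earlier; I would state this distinction explicitly so the Proposition is not misread as settling the harder limit problem.
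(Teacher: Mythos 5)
Your proposal is correct and takes essentially the same route as the paper: the existence/uniqueness claim is read off from the sum rule together with the concavity-in-$\bar p$ / convexity-in-$\bar q$ structure established just before the minimax theorem, and the threshold $4\beta^4\alpha(1-\alpha)\leq 1$ comes from linearizing the self-consistency system at the origin, which is exactly the tangent comparison the paper performs. Your explicit concavity of $y\mapsto\tanh^2(\sqrt{y})$ (hence of $F$, $G$ and $F\circ G$) is the same fact the paper invokes as monotonicity of the ratios $\bar p(\bar q)/\bar q$ and $\bar p/\bar q(\bar p)$ to rule out non-trivial crossings globally, and your slopes $F'(0)=2(1-\alpha)\beta^2$, $G'(0)=2\alpha\beta^2$ give the stated threshold cleanly, correcting the dropped factors of $2$ in the paper's displayed limits.
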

\begin{remark}
In fact for $4\beta^4\alpha(1-\alpha)\leq1$ the $\min\max$ is
obtained for $\bar q, \bar p=0$, and,  as it is easily seen, the
pressure (\ref{eq:A_RS}) reduces to the annealed one
(\ref{eq_A-ann}), that coincides with the true pressure of the
model in the thermodynamic limit in such a region.
\end{remark}

Furthermore, bearing in mind  (\ref{eq:A_RS}), together with
(\ref{eq:bar-q}) and (\ref{eq:bar-p}), it is a remarkable result
that the replica symmetric free energy of the bipartite model is
given by the convex combination of two monoparty spin glasses,
at different temperatures, exactly as happens in the ferromagnetic
case. This is clarified by the following
\begin{proposition}
Choosing $\beta'=\beta\sqrt{2\alpha}\sqrt{\frac{1-\bar q}{1-\bar p}}$ and $\beta''=\beta\sqrt{2(1-\alpha)}\sqrt{\frac{1-\bar q}{1-\bar p}}$, we have
\begin{equation}\label{eq:conv.RS}
A_{RS}(\alpha, \beta)=\alpha A^{SK}_{RS}(\beta')+(1-\alpha)A_{RS}^{SK}(\beta''),
\end{equation}
while, with the different  scaling of the inverse temperatures
$\beta'=\beta\sqrt{2\alpha}\sqrt{\frac{\bar q}{\bar p}}$ and
$\beta''=\beta\sqrt{2(1-\alpha)}\sqrt{\frac{\bar q}{\bar p}}$, we
have
\begin{equation}\label{eq:conv.RS-ann}
A_{RS}(\alpha, \beta)-A_A(\alpha, \beta)=\alpha \left(A^{SK}_{RS}(\beta')-A_A^{SK}(\beta')\right)+(1-\alpha)\left(A_{RS}^{SK}(\beta'')-A_A(\beta'')\right).
\end{equation}
\end{proposition}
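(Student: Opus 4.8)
The plan is to prove both identities by direct verification, treating them as algebraic consequences of the explicit replica-symmetric formula \eqref{eq:A_RS} together with the saddle-point equations \eqref{eq:bar-q}--\eqref{eq:bar-p}; this is the spin-glass counterpart of the convex decomposition established for the ferromagnet (Proposition 3), and it is structurally underpinned by the same quadratic interpolation already deployed in the high-temperature subsection. The only external ingredient I need is the explicit replica-symmetric pressure of a single Sherrington--Kirkpatrick party, namely $A^{SK}_{RS}(\gamma)=\log 2+\mathbb{E}_g\log\cosh(\gamma g\sqrt{q_\gamma})+\frac{\gamma^2}{4}(1-q_\gamma)^2$ with $q_\gamma=\mathbb{E}_g\tanh^2(\gamma g\sqrt{q_\gamma})$, whose normalisation is pinned down by the requirement that it collapse onto the annealed value $\log 2+\gamma^2/4$ used in the high-temperature subsection when $q_\gamma=0$.

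First I would insert the prescribed temperatures into the two monoparty pressures and identify their order parameters. The self-consistency $q_\gamma=\mathbb{E}_g\tanh^2(\gamma g\sqrt{q_\gamma})$ has exactly the structure of \eqref{eq:bar-q}--\eqref{eq:bar-p}, so that once the effective fields $\beta'\sqrt{q_{\beta'}}$ and $\beta''\sqrt{q_{\beta''}}$ are aligned with $\beta\sqrt{2(1-\alpha)\bar p}$ and $\beta\sqrt{2\alpha\bar q}$, the monoparty order parameters are forced to coincide with $\bar q$ and $\bar p$ respectively. Since $\mathbb{E}_g\log\cosh(cg)$ is determined by the effective field $c$ alone, this identification makes the two entropic $\log\cosh$ contributions of $\alpha A^{SK}_{RS}(\beta')+(1-\alpha)A^{SK}_{RS}(\beta'')$ match, term by term, the first two summands of \eqref{eq:A_RS}. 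It then remains to check that the weighted energetic pieces $\alpha\frac{\beta'^2}{4}(1-\bar q)^2+(1-\alpha)\frac{\beta''^2}{4}(1-\bar p)^2$ reorganise into the single bipartite interaction term $\alpha(1-\alpha)\beta^2(1-\bar q)(1-\bar p)$ of \eqref{eq:A_RS}.

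This last recombination is the heart of the argument and the step I expect to be the main obstacle: it is precisely here that the ratio between $\beta'$ and $\beta''$ (the analogue of the optimal balancing choice $a^2=\bar n/\bar m$ of Proposition 3) is dictated, and one must verify that a \emph{single} scaling simultaneously reconciles the entropic matching of the previous step with the energetic collapse. The transparent way to see why the right ratio annihilates the residual is the interpolation viewpoint already present in the paper: along the Gaussian interpolation towards two decoupled parties the $t$-streaming is the expectation of the perfect square $(\beta'\sqrt{\alpha}\,q_{12}-\beta''\sqrt{1-\alpha}\,p_{12})^2$, which under the replica-symmetric ansatz reduces to $(\beta'\sqrt{\alpha}\,\bar q-\beta''\sqrt{1-\alpha}\,\bar p)^2$; requiring this to vanish fixes the temperature ratio and turns the sum rule into an equality. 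For the second identity \eqref{eq:conv.RS-ann} I would rerun the same computation after subtracting from each monoparty pressure its annealed value $\log 2+\gamma^2/4$ and from the bipartite pressure the annealed value \eqref{eq_A-ann}, the alternative scaling $\sqrt{\bar q/\bar p}$ being the one adapted to this centring. The delicate part will be the bookkeeping of the deterministic quadratic constants generated by the streaming, together with checking that each of the two temperature choices is genuinely compatible with the saddle equations \eqref{eq:bar-q}--\eqref{eq:bar-p}; that is where I expect the real work, and any hidden correction term, to surface.
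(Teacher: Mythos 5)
Your plan correctly identifies the crux (``the energetic recombination''), but the tension you flag there is not a technical nuisance to be checked --- it is a genuine obstruction, and your toolkit cannot close it. Your step~2 (alignment of effective fields \emph{plus} SK self-consistency) forces the monoparty order parameters to be $q_{\beta'}=\bar q$, $q_{\beta''}=\bar p$, and therefore forces the temperatures $\beta'=\beta\sqrt{2(1-\alpha)\,\bar p/\bar q}$, $\beta''=\beta\sqrt{2\alpha\,\bar q/\bar p}$ --- i.e.\ (up to the garbled party labels in the printed statement) the \emph{second} scaling, the one the proposition attaches to \eqref{eq:conv.RS-ann}. With these temperatures the recombination you need for \eqref{eq:conv.RS} fails identically:
\[
\alpha\frac{\beta'^2}{4}(1-\bar q)^2+(1-\alpha)\frac{\beta''^2}{4}(1-\bar p)^2-\alpha(1-\alpha)\beta^2(1-\bar q)(1-\bar p)
=\frac{\alpha(1-\alpha)\beta^2}{2}\left(\sqrt{\tfrac{\bar p}{\bar q}}\,(1-\bar q)-\sqrt{\tfrac{\bar q}{\bar p}}\,(1-\bar p)\right)^{2},
\]
which is strictly positive unless $\bar p=\bar q$ (i.e.\ essentially unless $\alpha=1/2$). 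The interpolation argument you invoke to ``fix the ratio'' cannot rescue this: the streaming is a perfect square only under the product constraint $\beta'\beta''=2\beta^2\sqrt{\alpha(1-\alpha)}$ (a constraint that, incidentally, \emph{both} scalings as printed violate, since they carry the same ratio factor on the two temperatures); imposing that constraint together with the vanishing of $(\beta'\sqrt{\alpha}\,\bar q-\beta''\sqrt{1-\alpha}\,\bar p)^2$ selects once more the $\sqrt{\bar q/\bar p}$ scaling; and the resulting sum rule, because of the counterterm $\frac{t}{4}\left(\beta'^2\alpha+\beta''^2(1-\alpha)-4\beta^2\alpha(1-\alpha)\right)$, is automatically the annealed-centred statement. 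So every tool in your proposal proves \eqref{eq:conv.RS-ann} --- correctly, and in two ways --- while none of them can reach \eqref{eq:conv.RS}.

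The paper reaches \eqref{eq:conv.RS} by a different mechanism, which has a cost your write-up implicitly refuses to pay: it keeps $\bar q,\bar p$ as \emph{trial} order parameters inside the two monoparty functionals (renouncing SK self-consistency), leaves the $\log\cosh$ terms untouched, and applies Young's inequality $2xy\le a^2x^2+a^{-2}y^2$ to the interaction term, tight at $a^4=(1-\bar q)/(1-\bar p)$ --- the exact analogue of the choice $a^2=\bar n/\bar m$ in the ferromagnetic decomposition (Proposition~1). The price is that the monoparty expressions so obtained are SK replica symmetric trial functionals evaluated off their stationary point, not $A^{SK}_{RS}(\beta')$ and $A^{SK}_{RS}(\beta'')$ proper; the paper half-concedes this when it remarks that only the second choice of $a$ makes the two monoparty models ``trivially independent and separated'' with their own self-consistency relations. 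In other words, your insistence on genuine SK pressures is precisely what makes \eqref{eq:conv.RS} unreachable along your route: the honest conclusion of your own analysis is that \eqref{eq:conv.RS} holds only as an identity between trial functionals (or at $\bar p=\bar q$), whereas \eqref{eq:conv.RS-ann} is the one that holds with bona fide SK pressures. Finally, a smaller gap in the part you do get right: reducing $\langle(\beta'\sqrt{\alpha}\,q_{12}-\beta''\sqrt{1-\alpha}\,p_{12})^2\rangle_t$ to its value at $(\bar q,\bar p)$ must be justified for \emph{all} $t\in[0,1]$, not only at $t=1$; it does hold, but only because with the correct scaling the interpolating effective field, e.g.\ $2t\beta^2(1-\alpha)\bar p+(1-t)\beta'^2\bar q=2\beta^2(1-\alpha)\bar p$, is $t$-independent --- an observation your sketch omits but needs.
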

\begin{proof}
The proof follows by a  straight calculation. If we take for the
two monoparty model the two different inverse temperatures
$\beta'=\beta\sqrt{2\alpha}a^2$ and
$\beta''=\beta\sqrt{2(1-\alpha)}a^2$, with $a$ a free parameter to
be determined at the end, we have
\begin{eqnarray}
A_{RS}(\alpha, \beta)&=&\ln 2 + \alpha\mathbb{E}_g \log \cosh
\big( g\beta\sqrt{2(1-\alpha)\bar{p}}\big) +\nonumber\\
&+& (1-\alpha) \mathbb{E}_g \log \cosh
\Large( g\beta\sqrt{2\alpha\bar{q}} \Large)+\alpha(1-\alpha)\beta^2(1-\bar q)(1-\bar p)\nonumber\\
&=&\ln 2 + \alpha\mathbb{E}_g \log \cosh
\big( g\beta''a^2\sqrt{\bar{p}}\big) +\nonumber\\
&+& (1-\alpha) \mathbb{E}_g \log \cosh
\Large( \frac{g\beta}{a^2}\sqrt{\bar{q}} \Large)+\sqrt{\alpha(1-\alpha)}\beta'\beta''(1-\bar q)(1-\bar p)\nonumber\\
&\leq&\alpha\left( \log2 +\mathbb{E}_g\log\cosh\left(g\beta''a^2\sqrt{\bar p}\right) +\frac{\beta''^2}{4}a^2(1-\bar p)^2\right)\nonumber\\
&+&(1-\alpha)\left( \log2 +\mathbb{E}_g\log\cosh\left(g\frac{\beta'}{a^2}\sqrt{\bar q}\right) +\frac{\beta'^2}{4a^2}(1-\bar q)^2\right).
\end{eqnarray}
Now it is easily seen  that in the last formula we get an equality
with $a^4=\frac{1-\bar q}{1-\bar p}$. Following exactly the same
path of the previous part of the work (dealing with ferromagnetic
models) we recover (\ref{eq:conv.RS-ann}) with the choice
$a^4=\frac{\bar q}{\bar p}$. We stress that this last value of $a$
is more meaningful, in the sense that in this case the two
monoparty models are trivially independent and separated, with
the order parameters given by usual self consistency relations for
the SK model:
\begin{eqnarray}
\bar q&=&\mathbb{E}_g \tanh^2(g\beta'\sqrt{\bar q}),\nonumber\\
\bar p&=&\mathbb{E}_g \tanh^2(g\beta''\sqrt{\bar p}),\nonumber
\end{eqnarray}
in perfect analogy with the ferromagnetic case.
\end{proof}
Lastly, as it is well known, a theory  with  no overlap
fluctuations allowed may not hold at low temperatures and we want
to report about its properties in the limit $\beta\to\infty$ to
check the stability of the replica symmetric ansatz. We will
concern about the ground state energy ${\hat e}_{RS}$ and its
associated entropy ${\hat s}_{RS}$, defined by
\begin{eqnarray} \label{ehat}
{\hat e}_{RS}(\alpha)&=&\lim_{\beta\to\infty} \partial_{\beta}{\bar A}_{RS}(\alpha,\beta)=\lim_{\beta\to\infty} {\bar A}_{RS}(\alpha,\beta)/{\beta},\\
\label{shat}
{\hat s}_{RS}(\alpha)&=&\lim_{\beta\to\infty}\big({\bar A}_{RS}(\alpha,\beta)-\beta \partial_{\beta}{\bar A}_{RS}(\alpha,\beta)\big).
\end{eqnarray}
First of all, from the self-consistency equations (\ref{eq:bar-q}, \ref{eq:bar-p}), through a long but straightforward calculation, we can compute the low temperature limit for the order parameters, ${\bar q}(\alpha,\beta),{\bar p}(\alpha, \beta)\to1$, together with the rates they approach to their limit value, $\beta(1-{\bar q}(\alpha,\beta))\to (\pi(1-\alpha))^{-1/2}$, $\beta(1-{\bar p}(\alpha,\beta))\to (\pi\alpha)^{-1/2}$. Then, bearing in mind the explicit form of the pressure of the model in the replica symmetric regime (\ref{eq:A_RS}), we derive the following expressions for the ground state energy and the entropy:
\begin{eqnarray} \label{ehat1}
{\hat e}_{RS}(\alpha)&=& \sqrt{\frac{\alpha(1-\alpha)}{\pi}},\\
\label{shat1}
{\hat s}_{RS}(\alpha)&=& -\frac{2}{\pi} \left(1- \sqrt{\alpha(1-\alpha)} \right).
\end{eqnarray}
Notice that the entropy ${\hat s}_{RS}(\alpha)$ is strictly less
than zero for every $\alpha\in(0,1)$, that is a typical feature of
the replica symmetric \textit{ansatz} for glassy systems.
Therefore, the true solution of the model must involve replica
symmetry breaking. Furthermore, it is a concave function of
$\alpha$, and assume its maximum value ${\hat s}_{RS}=-1/\pi$ in
$\alpha=1/2$, \textit{i.e.} the balanced bipartite, and its
minimum value ${\hat s}_{RS}=-2/\pi$ at the ending points
$\alpha=0,1$, when the size of one party is negligible in the
thermodynamic limit. Analogously,  for the ground state energy we
find in the perfectly balanced case ${\hat e}_{RS}(1/2)
=1/2\sqrt{\pi}$, and at the extrema of the definition interval of
$\alpha$ ${\hat e}_{RS}(0)={\hat e}_{RS}(1) =0$.

\subsection{Constraints}

While  the order parameters for simple models (as the bipartite
CW) are self-averaging, frustrated systems are expected to show
the replica symmetry breaking phenomenon \cite{MPV}\cite{broken},
which ultimately inhibits such a self-averaging properties for
$\langle q_{12} \rangle, \langle p_{12} \rangle$. As a consequence
a certain interest for the constraints to free overlap
fluctuations raised in the past
\cite{sum-rules}\cite{GG}\cite{AC}\cite{barra1} (and recently has
been deeply connected to ultrametricity
\cite{panchenko}\cite{arguin}) which motivate us to work out the
same constraints even in bipartite models.
\newline
To fulfil this task the first step is obtaining an explicit
expression for the internal energy density (which is
self-averaging \cite{contuccigiardina}).

\begin{theorem}\label{selfaverage}
The following expression for the internal energy density of the
bipartite spin glass model holds in the thermodynamic limit
\be\label{energy} \lim_{N \to \infty} \frac{1}{N} \langle
H_{N_1,N_2}(\sigma, \tau; \xi) \rangle = e(\alpha,\beta)=
2\alpha(1-\alpha)\beta^2 \Big(1- \langle q_{12}p_{12} \rangle
 \Big).
\ee
\end{theorem}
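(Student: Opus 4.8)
The plan is to derive the identity first at finite volume by a single application of Gaussian integration by parts (Wick's lemma) on the quenched average, and only afterwards to address the limit $N\to\infty$. Writing the internal energy density in the $\beta$-scaled normalization used in the statement, $e_N=\frac{1}{N}\mathbb{E}\,\omega(-\beta H_{N_1,N_2})=\frac{\beta}{N}\sqrt{\tfrac{2}{N}}\sum_{ij}\mathbb{E}\big[\xi_{ij}\,\omega(\sigma_i\tau_j)\big]$, where $\omega$ is the Boltzmann state at fixed disorder, I would use that each $\xi_{ij}$ is a standard Gaussian, so that $\mathbb{E}[\xi_{ij}F(\xi)]=\mathbb{E}[\partial_{\xi_{ij}}F(\xi)]$.

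First I would compute $\partial_{\xi_{ij}}\omega(\sigma_i\tau_j)$. Since $\xi_{ij}$ enters the weight only through the exponent $\beta\sqrt{2/N}\sum_{kl}\xi_{kl}\sigma_k\tau_l$, differentiating the state yields the connected correlation $\beta\sqrt{2/N}\,[\omega((\sigma_i\tau_j)^2)-\omega(\sigma_i\tau_j)^2]$; the dichotomy of the spins, $(\sigma_i\tau_j)^2=1$, collapses this to $\beta\sqrt{2/N}\,[1-\omega(\sigma_i\tau_j)^2]$, so no uncontrolled higher moments survive. Next I would read the square as a two-replica object, $\omega(\sigma_i\tau_j)^2=\Omega(\sigma_i^1\sigma_i^2\,\tau_j^1\tau_j^2)$, and perform the double sum: because $\sum_i\sigma_i^1\sigma_i^2=N_1 q_{12}$ and $\sum_j\tau_j^1\tau_j^2=N_2 p_{12}$, the sum factorizes into $N_1 N_2\,\Omega(q_{12}p_{12})$, while $\sum_{ij}1=N_1N_2$. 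Collecting the prefactors, the coupling $\sqrt{2/N}$ carried by the Hamiltonian and the further $\beta\sqrt{2/N}$ produced by differentiating the Boltzmann weight combine with the $\beta$ of the normalization to give $\frac{2\beta^2}{N^2}\sum_{ij}\mathbb{E}[1-\omega(\sigma_i\tau_j)^2]=2\beta^2\frac{N_1N_2}{N^2}\,\mathbb{E}[1-\Omega(q_{12}p_{12})]$; recalling $N_1N_2/N^2\to\alpha(1-\alpha)$ reproduces both powers of $\beta$, the positive sign, and the factor $(1-\meanv{q_{12}p_{12}})$ announced in the theorem. As a cross-check this equals twice the quantity $\mathcal{A}$ of the replica-symmetric subsection evaluated at $t=1$, the discrepant factor $1/2$ being exactly the one generated there by $\partial_t\sqrt{t}$.

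The genuinely analytic content is thus confined to the integration by parts; everything else is algebraic bookkeeping, and the displayed relation in fact holds exactly at finite $N$ up to the harmless replacement $\alpha_N\to\alpha$. The only delicate point is the passage to the thermodynamic limit: as the text itself remarks, the existence of $\lim_N A_{N_1,N_2}$ is not established rigorously, so the cleanest reading is that the left-hand side converges because the internal energy density is self-averaging (the cited Contucci--Giardin\`a result), which simultaneously legitimizes stating the theorem for the quenched average $\meanv{\cdot}$ rather than for a fluctuating object. I would close by stressing that, in contrast with the free-energy sum rules, no positivity or convexity of any error term is required here, precisely because the identity is exact before any limit is taken.
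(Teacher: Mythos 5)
Your proof is correct and is exactly the ``direct evaluation'' the paper invokes but skips: a single Gaussian integration by parts on $\mathbb{E}[\xi_{ij}\,\omega(\sigma_i\tau_j)]$, the collapse $(\sigma_i\tau_j)^2=1$, and the two-replica rewriting $\omega(\sigma_i\tau_j)^2=\Omega(\sigma_i^1\sigma_i^2\tau_j^1\tau_j^2)$ constitute the same Wick-theorem computation the paper itself carries out in the proofs of the subsequent overlap-constraint propositions. Your reading of the left-hand side as $\frac{1}{N}\langle -\beta H_{N_1,N_2}\rangle$ (needed to produce the $\beta^2$ and the positive sign) correctly repairs the statement's normalization, and observing that the identity is exact at finite $N$ with $\alpha_N$ in place of $\alpha$ handles the thermodynamic limit in the only sense the paper intends.
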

As the proof can be achieved by direct evaluation, we skip it and
turn to the constraints: Starting with the linear identities we
state the following
\begin{proposition}\label{ACcorollary}
In the thermodynamic limit, and $\beta$ almost-everywhere, the
following generalization of the linear overlap constraints holds
for the bipartite spin glass
 \be\label{ACprima}
 \langle q_{12}^2p_{12}^2 \rangle - 4 \langle
q_{12}p_{12}q_{23}p_{23} \rangle + 3 \langle
q_{12}p_{12}q_{34}p_{34}\rangle  =0.\ee
\end{proposition}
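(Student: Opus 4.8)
The plan is to read the identity~(\ref{ACprima}) as the bipartite incarnation of the Aizenman--Contucci relation and to derive it from the self-averaging of the internal energy density recorded in Theorem~\ref{selfaverage}. Recall that by a single Gaussian integration by parts on the quenched noise $\xi_{ij}$, using $\sum_i\sigma_i^a\sigma_i^b=N_1 q_{ab}$ and $\sum_j\tau_j^a\tau_j^b=N_2 p_{ab}$, the quenched energy density $\mathbb{E}\,\omega(H_{N_1,N_2}/N)$ is an affine function of $\langle q_{12}p_{12}\rangle$, exactly as in Theorem~\ref{selfaverage}. Hence the combined overlap $Q_{ab}:=q_{ab}p_{ab}$ plays here the role that $q_{ab}^2$ plays for the one--party SK model, and the expected identity is precisely the AC polynomial built on the atom $Q_{ab}$. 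The self-averaging statement I will use is that the disorder fluctuation $\mathbb{E}[\omega(H/N)^2]-(\mathbb{E}\,\omega(H/N))^2$ vanishes for almost every $\beta$: this follows from the convexity of $\beta\mapsto N^{-1}\log Z_{N_1,N_2}$ and its convergence to the deterministic pressure $A(\alpha,\beta)$, since convergence of convex functions forces convergence of the derivatives $-\omega(H/N)$ at every point of differentiability of the limit, so $\omega(H/N)$ concentrates a.e. in $\beta$. (The purely thermal fluctuation $\mathbb{E}[\omega(H^2/N^2)-\omega(H/N)^2]=A_N''(\beta)/N$ is $O(1/N)$ by boundedness of the specific heat, so in fact the full energy density self-averages.)

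The heart of the proof is then to expand the disorder variance of the energy density by a \emph{double} Gaussian integration by parts. Writing
\[
\mathbb{E}\big[\omega(H/N)^2\big]=\frac{2}{N^3}\sum_{ij,kl}\mathbb{E}\big[\xi_{ij}\xi_{kl}\,\Omega(\sigma_i^{1}\tau_j^{1}\sigma_k^{2}\tau_l^{2})\big],
\]
I would differentiate the two shared factors $\xi_{ij}$ and $\xi_{kl}$ in turn. The ``contact'' terms produced by $\partial_{\xi_{ij}}\xi_{kl}=\delta_{ik}\delta_{jl}$ are $O(1/N)$ (there are only $N_1N_2$ of them against the prefactor $N^{-3}$), so the surviving contributions come from the derivatives acting on the Boltzmann weights, each bringing down a factor $\beta\sqrt{2/N}\,\sum_a\sigma_i^a\tau_j^a$ over the present replicas, corrected by the state average that introduces the fresh replicas $3$ and $4$. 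Every double index sum reassembles into $\sum_{ij}\sigma_i^a\sigma_i^b\tau_j^a\tau_j^b=N_1N_2\,Q_{ab}$ for the appropriate replica pair, and the two factors of $N_1N_2$ combine with the prefactor to give $4\alpha^2(1-\alpha)^2\beta^2$.

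Collecting the terms according to how many distinct replicas they involve, and subtracting $(\mathbb{E}\,\omega(H/N))^2$ to discard the disconnected piece $\langle Q_{12}\rangle\langle Q_{34}\rangle$, the variance takes the form
\[
\mathbb{E}\big[\omega(H/N)^2\big]-\big(\mathbb{E}\,\omega(H/N)\big)^2
=4\alpha^2(1-\alpha)^2\beta^2\Big(\langle Q_{12}^2\rangle-4\langle Q_{12}Q_{23}\rangle+3\langle Q_{12}Q_{34}\rangle\Big)+O(1/N),
\]
where $\langle Q_{12}Q_{23}\rangle=\langle q_{12}p_{12}q_{23}p_{23}\rangle$ and $\langle Q_{12}Q_{34}\rangle=\langle q_{12}p_{12}q_{34}p_{34}\rangle$ by exchangeability of the quenched state. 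The coefficients $1,-4,3$ arise purely from the replica combinatorics of the two integrations by parts: the two derivatives may land on the two present replicas in $2\times2$ ways, each state-average correction spawning a new replica, and the disconnected contribution being cancelled by the subtraction; as a consistency check they sum to zero. Since the left-hand side vanishes for a.e.\ $\beta$ while the prefactor $4\alpha^2(1-\alpha)^2\beta^2$ is strictly positive for $\alpha\in(0,1)$ and $\beta>0$, dividing through and letting $N\to\infty$ yields precisely~(\ref{ACprima}).

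I expect the main obstacle to be the bookkeeping of the double integration by parts: one must show that all index-coincidence terms are genuinely $O(1/N)$ and, more delicately, that the numerical coefficients of the three surviving overlap monomials are exactly $1,-4,3$ rather than another combination summing to zero. A secondary technical point is the ``a.e.\ in $\beta$'' qualifier, which is intrinsic because the disorder self-averaging of the energy is obtained from convexity and therefore only holds outside the (at most countable) set of non-differentiability points of the limiting pressure.
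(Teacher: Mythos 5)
Your strategy---energy self-averaging plus Gaussian integration by parts---does yield \emph{an} overlap identity, and your handling of the ``a.e.\ in $\beta$'' qualifier is fine, but the central expansion is wrong: the disorder variance of the energy density does \emph{not} produce the combination with coefficients $1,-4,3$. Writing $c_{ab}=2\alpha(1-\alpha)q_{ab}p_{ab}$ and $\bar c=c_{aa}=2\alpha(1-\alpha)$, a careful double integration by parts gives $\mathbb{E}\,\omega(H/N)=-\beta\left(\bar c-\langle c_{12}\rangle\right)$ and
\begin{equation*}
\mathbb{E}\left[\omega(H/N)^2\right]-\left(\mathbb{E}\,\omega(H/N)\right)^2
=\beta^2\left[\langle c_{12}^2\rangle-6\langle c_{12}c_{23}\rangle+6\langle c_{12}c_{34}\rangle-\langle c_{12}\rangle^2\right]+\OO{N},
\end{equation*}
i.e.\ coefficients $1,-6,+6,-1$. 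There is a structural reason your bookkeeping cannot come out as claimed: integration by parts applied to the single quenched average $\mathbb{E}\left[\omega(H)^2\right]$ generates only joint averages $\langle\,\cdot\,\rangle$ of overlap monomials, never products of two quenched averages; hence subtracting $\left(\mathbb{E}\,\omega(H/N)\right)^2$ necessarily leaves an uncancelled $-\langle q_{12}p_{12}\rangle^2$ (it is \emph{not} ``discarded by the subtraction''), and the coefficient of $\langle q_{12}p_{12}q_{34}p_{34}\rangle$ is $+6$, not $+3$. What you actually prove is the first equation of the linear system appearing in the proof of Proposition \ref{GG}, which is a genuinely different constraint from (\ref{ACprima}): under replica symmetry breaking $\langle q_{12}p_{12}q_{34}p_{34}\rangle\neq\langle q_{12}p_{12}\rangle^2$, so one cannot be traded for the other, and your argument does not establish the Proposition.

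The combination $1,-4,3$ comes precisely from the term you relegated to a parenthesis and never expanded: the \emph{thermal} fluctuation of the energy. Indeed, integration by parts gives
\begin{equation*}
\frac{1}{N^2}\,\mathbb{E}\left[\omega(H^2)-\omega(H)^2\right]
=\frac{A''_{N}(\beta)}{N}
=\frac{\bar c-\langle c_{12}\rangle}{N}
-2\beta^2\left[\langle c_{12}^2\rangle-4\langle c_{12}c_{23}\rangle+3\langle c_{12}c_{34}\rangle\right],
\end{equation*}
and since $A''_N$ is bounded in $\beta$-average by convexity, the bracket vanishes a.e.\ in $\beta$ as $N\to\infty$. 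Equivalently---and this is the paper's route---one computes that $\partial_\beta\langle q_{12}p_{12}\rangle$ equals $N$ (times a positive constant) multiplying exactly the bracket in (\ref{ACprima}), and imposes that this response of a uniformly bounded quantity cannot diverge. So the Aizenman--Contucci identity encodes the boundedness of the thermal response and is \emph{not} a consequence of disorder self-averaging alone; your proof can be repaired either by (i) running the integration by parts on the thermal variance (the paper's argument in disguise), or (ii) keeping your corrected disorder-variance identity and combining it with the full-variance identity $\langle c_{12}c_{23}\rangle=\frac{1}{2}\langle c_{12}^2\rangle+\frac{1}{2}\langle c_{12}\rangle^2$, from which (\ref{ACprima}) follows algebraically. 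As written, however, the key expansion is incorrect and the argument fails.
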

\begin{proof}
Let us address our task by looking at the $\beta$ streaming of the
internal energy density,  once expressed via $\langle q_{12}p_{12}
\rangle$; in a nutshell, physically, we obtain these constraints
by imposing that such a response can not diverge, neither in the
thermodynamic limit:
\begin{eqnarray}
\partial_{\beta} \langle q_{12}p_{12}\rangle &=&
\frac{1}{N_1 N_2}\sum_{i,j} \mathbb{E} \partial_{\beta} \omega^2
(\sigma_i\tau_{j}) = \frac{1}{N_1 N_2}\sum_{i,j} \mathbb{E} 2
\omega (\sigma_i \tau_{j})\partial_{\beta} \omega(\sigma_i
\tau_{j})
\\ &=&
\frac{2}{N_1 N_2}\sum_{i,j}\mathbb{E}\omega(\sigma_i
\tau_{j})\xi_{i\nu}\Big( \omega(\sigma_i \tau_{j} \sigma_j
\tau_{\nu}) -\omega(\sigma_i \tau_{j})\omega(\sigma_j
\tau_{\nu}) \Big),
\end{eqnarray}
now we use Wick theorem on $\xi$ and introducing the overlaps we
have
\begin{eqnarray}
\partial_{\beta} \langle q_{12}p_{12}\rangle &=& K\Big( \langle
p_{12}^2q_{12}^2 \rangle - \langle p_{12}q_{12}p_{13}q_{13}
\rangle \\ \nonumber &-&  \langle p_{12}q_{12}p_{13}q_{13} \rangle
+ \langle p_{12}q_{12}p_{34}q_{34} \rangle  + \langle \bar{p}
q_{12}p_{12} - \langle p_{12}q_{12}p_{13}q_{13} \rangle
\\ \nonumber
&-&  \langle p_{12}q_{12}p_{13}q_{13} \rangle +  \langle
p_{12}q_{12}p_{34}q_{34} \rangle - \langle \bar{p}q_{12}p_{12}
\rangle +  \langle p_{12}q_{12}p_{34}q_{34} \rangle.
\end{eqnarray}
The several cancelations leave the following remaining terms \be
\partial_{\beta} \langle q_{12}p_{12}\rangle = N_2 \Big( \langle q_{12}^2p_{12}^2 \rangle -4  \langle q_{12}p_{12}q_{23}p_{23} \rangle + 3 \langle q_{12}p_{12}q_{34}p_{34} \rangle \Big)
\ee and, again in the thermodynamic limit, the thesis is proved.
\end{proof}

\begin{proposition}\label{GG}
In the thermodynamic limit, and in $\beta$-average,  the following
generalization of the quadratic Ghirlanda-Guerra relations holds
for the bipartite spin glass
\begin{eqnarray}
\langle q_{12}p_{12}q_{23}p_{23} \rangle = \frac{1}{2}\langle
q_{12}^2p_{12}^2 \rangle + \frac{1}{2}\langle q_{12}p_{12}
\rangle^2, \\
\langle q_{12}p_{12}q_{34}p_{34} \rangle = \frac{1}{3}\langle
q_{12}^2 p_{12}^2 \rangle + \frac{2}{3}\langle q_{12}p_{12}
\rangle^2.
\end{eqnarray}
\end{proposition}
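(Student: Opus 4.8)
The plan is to recognise these quadratic identities as the bipartite incarnation of the Ghirlanda--Guerra relations, with the \emph{composite} overlap $r_{ab}:=q_{ab}p_{ab}$ playing exactly the role that the single overlap plays in the Sherrington--Kirkpatrick model. This is forced by the covariance structure of the quenched Hamiltonian: a direct computation gives
\[
\mathbb{E}\big[H_{N_1,N_2}(\sigma^a,\tau^a)\,H_{N_1,N_2}(\sigma^b,\tau^b)\big]=\frac{2}{N}\sum_{i,j}\sigma^a_i\tau^a_j\sigma^b_i\tau^b_j=2N\alpha(1-\alpha)\,q_{ab}p_{ab},
\]
so the disorder couples distinct replicas only through $r$. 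Since Theorem \ref{selfaverage} already provides the self-averaging of the internal energy density, written precisely through $\langle q_{12}p_{12}\rangle$, the argument of \cite{GG} applies once $q$ is replaced by $r$.

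Concretely, first I would fix a bounded function $f$ of $s$ replicas and apply Gaussian (Wick) integration by parts to $\frac{1}{N}\mathbb{E}\langle f\,H(\sigma^1,\tau^1)\rangle$. Using $\mathbb{E}[\xi_{ij}g]=\mathbb{E}[\partial_{\xi_{ij}}g]$ together with $\sum_{i,j}\sigma^a_i\tau^a_j\sigma^b_i\tau^b_j=N^2\alpha(1-\alpha)\,r_{ab}$ (and $r_{aa}=1$), the derivative of the replicated Boltzmann weight generates a fresh replica $s+1$ and yields
\[
\frac{1}{N}\mathbb{E}\langle f\,H(\sigma^1,\tau^1)\rangle=-2\beta\alpha(1-\alpha)\Big[\langle f\rangle+\sum_{a=2}^{s}\langle f\,r_{1a}\rangle-s\,\langle f\,r_{1,s+1}\rangle\Big].
\]
The same formula with $f=1$, $s=1$ reproduces the internal energy density of Theorem \ref{selfaverage} (up to the normalisation adopted there). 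The self-averaging of $H/N$ --- the vanishing of both its thermal and its sample-to-sample fluctuations --- implies $\frac{1}{N}\mathbb{E}\langle f\,(H-\langle H\rangle)\rangle\to0$; replacing $\frac{1}{N}\langle H\rangle$ by its deterministic limit $-2\beta\alpha(1-\alpha)\big(1-\langle r_{12}\rangle\big)$ inside $\frac{1}{N}\mathbb{E}[\langle f\rangle\langle H\rangle]$ and comparing with the display above (the common prefactor $-2\beta\alpha(1-\alpha)$ cancels) produces the master identity
\[
\langle f\,r_{1,s+1}\rangle=\frac{1}{s}\langle f\rangle\langle r_{12}\rangle+\frac{1}{s}\sum_{a=2}^{s}\langle f\,r_{1a}\rangle,
\]
valid in the thermodynamic limit and in $\beta$-average.

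It then remains to specialise, using the exchangeability of the replicas. Taking $s=2$ and $f=r_{12}$ gives $\langle r_{12}r_{13}\rangle=\frac{1}{2}\langle r_{12}\rangle^2+\frac{1}{2}\langle r_{12}^2\rangle$, and relabelling $1\leftrightarrow2$ shows $\langle r_{12}r_{13}\rangle=\langle r_{12}r_{23}\rangle$, which is the first asserted relation once $r$ is written back as $q\,p$. For the second, the master identity with the new replica attached to replica $3$ (i.e. $s=3$, $f=r_{12}$, overlap $r_{34}$) reads $\langle r_{12}r_{34}\rangle=\frac{1}{3}\langle r_{12}\rangle^2+\frac{2}{3}\langle r_{12}r_{23}\rangle$; substituting the first relation gives $\langle r_{12}r_{34}\rangle=\frac{1}{3}\langle r_{12}^2\rangle+\frac{2}{3}\langle r_{12}\rangle^2$, as claimed.

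The delicate step --- the only one beyond bookkeeping --- is the self-averaging input, and in particular the vanishing of the \emph{disorder} fluctuations of the energy density. For Gaussian couplings this follows from concentration of the free energy, but because $A(\alpha,\beta)$ need not be differentiable in $\beta$ everywhere and the fluctuation bounds degenerate at isolated points, one cannot conclude pointwise in $\beta$. The standard remedy is to integrate over a small $\beta$-interval, which is precisely why the statement is phrased ``in $\beta$-average''. Controlling this averaging uniformly in $N$, and rigorously justifying the replacement of $\frac{1}{N}\langle H\rangle$ by its limit inside the correlation with $f$, is where the real work lies.
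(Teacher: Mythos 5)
Your proof is correct, but it follows a genuinely different route from the paper's. The paper obtains the two identities as the solution of a $2\times 2$ linear system: one equation is the linear constraint of Proposition \ref{ACcorollary} (itself derived from the requirement that $\partial_\beta\langle q_{12}p_{12}\rangle$ stay finite in the thermodynamic limit), and the other comes from imposing the vanishing of the fluctuations of the internal energy, $\langle e^2\rangle-\langle e\rangle^2=0$, with $\mathbb{E}(e^2)$ computed by the same Wick calculation; solving the system yields exactly the two stated relations. You instead bypass the linear constraints entirely: from the covariance structure $\mathbb{E}[H^aH^b]=2N\alpha_N(1-\alpha_N)\,r_{ab}$, with $r_{ab}:=q_{ab}p_{ab}$, you derive the full Ghirlanda--Guerra master identity for an arbitrary bounded function $f$ of $s$ replicas, using Gaussian integration by parts plus the self-averaging of $H/N$ tested against $f$ (Cauchy--Schwarz reduces this to the variance statement), and then you specialize twice ($s=2$ with $f=r_{12}$; $s=3$ with the fresh replica attached to replica $3$); your bookkeeping in both specializations checks out. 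Both arguments rest on the same physical input --- self-averaging of the energy density, valid only in $\beta$-average, a caveat you correctly isolate as the genuinely delicate analytic point --- but the trade-offs differ: the paper's proof is shorter given that Proposition \ref{ACcorollary} is already available, while yours is self-contained, yields a whole family of identities beyond the two stated (any $f$, any number of replicas), and makes transparent the structural reason the Sherrington--Kirkpatrick argument transfers verbatim, namely that the composite overlap $r_{ab}$ is the only channel through which the disorder couples replicas. In fact your route is strictly stronger: substituting your two relations into the left-hand side of (\ref{ACprima}) gives
\begin{equation}
\langle r_{12}^2\rangle-4\left(\tfrac{1}{2}\langle r_{12}^2\rangle+\tfrac{1}{2}\langle r_{12}\rangle^2\right)+3\left(\tfrac{1}{3}\langle r_{12}^2\rangle+\tfrac{2}{3}\langle r_{12}\rangle^2\right)=0,\nonumber
\end{equation}
so the paper's Proposition \ref{ACcorollary} is recovered as a corollary of your argument rather than being needed as an input to it.
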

\begin{proof}
The idea is to impose, in the thermodynamic limit, the
self-averaging of the internal energy (i.e. $\langle
e^2(\alpha,\beta) \rangle - \langle e(\alpha,\beta) \rangle^2 =
0$); we obtain a rest that must be set to zero and gives the
quadratic control. Starting from
$$
 \mathbb{E}(e^2(\alpha,\beta)) = \frac{1}{(N_1+N_2)^3}
 \sum_{j,i}\sum_{\nu,j} \xi_{ij}\xi_{j\nu}\omega(\sigma_i
 \tau_{j}) \omega(\sigma_j \tau_{\nu}),
$$
 with a calculation perfectly analogous to the one
performed in the proof of Proposition \ref{ACcorollary} and
comparing with the former relations, we get the linear system
\begin{eqnarray}
0 &=& \langle q_{12}^2p_{12}^2 \rangle + 6 \langle
q_{12}p_{12}q_{34}p_{34} \rangle -6 \langle
q_{12}p_{12}q_{23}p_{23} \rangle - \langle q_{12}p_{12} \rangle^2,  \\
0 &=& \langle q_{12}^2p_{12}^2 \rangle -4  \langle
q_{12}p_{12}q_{23}p_{23} \rangle +3\langle
q_{12}p_{12}q_{34}p_{34} \rangle,
\end{eqnarray}
whose solutions give exactly the expressions reported in
Proposition \ref{GG}.
\end{proof}

\section{Conclusion}

In this paper we analyzed the equilibrium behavior of bipartite
spin systems (interacting both with ferromagnetic or with spin
glass couplings) trough statistical mechanics; these systems are
made of by two different subsets of spins (a priori of different
nature \cite{BGG1}\cite{io1}), for the sake of clearness each one
interacting with the other, but with no self-interactions. For the
former class trough several techniques, among which our mechanical
analogy of the interpolation method, early developed in
\cite{sum-rules} and successfully investigated in
\cite{barra0}\cite{io1}\cite{dibiasio},  we have seen that the
thermodynamic limit of the pressure does exist and it is unique
and we gave its explicit expression in a constructive way via a
minimax principle. Further, when introducing the Burger's equation
for the velocity field in our interpretation of the interpolating
scheme, our method automatically ''chooses'' the correct order
parameter, which turns out to be a linear combination of the
magnetizations of the two subsystems with different signs, so to
convert the minimax variational principle in a standard
extremization procedure.  Noticing that the same structure can be
recovered for many other models of greater interest, like
bipartite spin glasses, we went over and analyzed even the latter.
\newline
For these models we have studied both the annealing and the replica
symmetric approximation (the latter trough the double stochastic
stability technique recently developed in \cite{BGG1}) which
allowed us to give an explicit expression for the free energy  and
to discover and discuss the same minimax principle of the
bipartite ferromagnets.
\newline
Furthermore, we evaluated the replica symmetric observable in the
low temperature limit confirming the classical vision about the
need for a broken replica symmetry scheme: one step forward in
this sense, by studying the properties of the internal energy, we
derived all the classical constraints to the free overlap
fluctuations (suitable obtained for these systems) and we worked
out a picture of their criticality to conclude the investigation.
\newline
Future works on these subject should be addressed toward a
complete full replica symmetry broken picture and to a systematic
exploration of the multi-partite equilibria.

\section*{Acknowledgements}

Authors are grateful to MiUR trough the FIRB grant number
$RBFR08EKEV$ and to Sapienza University of Rome.
\newline
AB is partially funded by GNFM (Gruppo Nazionale per la Fisica
Matematica) which is also acknowledged.
\newline
FG is partially funded by INFN (Istituto Nazionale di Fisica
Nucleare) which is also acknowledged.


\begin{thebibliography}{9}

\bibitem{AC} M. Aizenman P. Contucci, {\em On the stability of the Quenched State in Mean Field Spin Glass
Models}, J. Stat. Phys. \textbf{92},  765-783 (1998).

\bibitem{aizenman} M. Aizenman, R. Sims, S. Starr, {\em Extended variational principle for
the Sherrington-Kirkpatrick spin-glass model}, Phys. Rev. B
\textbf{68}, 214403-214407 (2003).

\bibitem{arguin} L.P. Arguin, M. Aizenman, {\em On the structure of quasi-stationary competing particle
systems}, Ann. of Prob. \textbf{37}, 1080-1113 (2009).

\bibitem{barra0} A. Barra,
\textit{The mean field Ising model throught interpolating
techniques}, J. Stat. Phys., \textbf{132}, 787-802 (2008).

\bibitem{barra1} A. Barra, {\em Irreducible free energy expansion and overlap locking in mean field spin
glasses}, J. Stat. Phys., \textbf{123}, 601-614 (2006).

\bibitem{barragliari} A. Barra,  E. Agliari, {\em A statistical mechanics approach to autopoietic immune
networks}  J. Stat. Mech.,  $P07004$ (2010).

\bibitem{BC1} A. Barra, P. Contucci, {\em Toward a quantitative approach to migrants
integration}, Europhys. Lett., \textbf{89}, 68001-68007 (2010).

\bibitem{dibiasio}  A. Barra, A. Di Biasio, F. Guerra,
{\em Replica symmetry breaking in mean field spin glasses trough
Hamilton-Jacobi technique}, J. Stat. Mech.  P09006 (2010).

\bibitem{BGG1} A. Barra, G. Genovese, F. Guerra,
{\em The replica symmetric behavior of the analogical neural
network},  J. Stat. Phys. \textbf{140}, 784-796 (2010).

\bibitem{BG1} A. Barra, F. Guerra, {\em Ergodicity in the analogical Hopfield model for neural
networks}, J. Math. Phys. \text{49}, 125127 (2008).

\bibitem{contuccigiardina} P. Contucci, C. Giardina', {\em The
Ghirlanda-Guerra Identities}, J. Stat. Phys., \textbf{126}, 917-934 
(2007).

\bibitem{sw} P.S. Dodds, R. Muhamad, D.J. Watts, {\em An Experimental
Study of Search in Global Social Networks}, Science, \textbf{301},
5634 (2003).

\bibitem{brock} S.N. Durlauf, {\em How can statistical mechanics contribute to social
science?}, Proc. Natl. Acad. Sc. \textbf{96}, 10582 (1999).

\bibitem{bip} I. Gallo, P. Contucci,
{\em Bipartite mean field spin systems. Existence and solution}, Math. Phys. E. J. \textbf{14}, 463 (2008).

\bibitem{io1} G. Genovese, A. Barra,
{\em A mechanical approach to mean field spin models}, J. Math.
Phys. \textbf{50}, 365234 (2009).

\bibitem{GG} S. Ghirlanda F. Guerra, {\em General properties of overlap probability distributions in
disordered  spin systems}, J. Phys. A \textbf{31}, 9149-9155 (1998).

\bibitem{sum-rules} F. Guerra, {\em Sum rules for the free energy in the mean
field spin glass model}, in {\em Mathematical Physics in
Mathematics and Physics: Quantum and Operator Algebraic Aspects},
Fields Institute Communications {\bf 30}, 161 (2001).


\bibitem{broken} F. Guerra, {\em Broken Replica Symmetry Bounds in the Mean Field Spin Glass
Model}, Comm.  Math. Phys., \textbf{233}, 1-12 (2003).

\bibitem{leshouches} F. Guerra, {\em Spin Glasses}, in {\em Mathematical Statistical Physics}, A. Bovier et al.  (Editors), 243-271, Elsevier, Oxford, 2006.

\bibitem{guerra1} F. Guerra, {\em About the Overlap Distribution in Mean Field Spin Glass Models}, Int. Jou. Mod.
Phys. B \textbf{10}, $1675-1684$ (1996).

\bibitem{guerra2} F. Guerra, {\em The Cavity Method in the Mean Field Spin Glass Model}, Advances in Dynamical
Systems and Quantum Physics, S. Albeverio et al. (Editors), Singapore,
(1995).

\bibitem{GT2} F. Guerra, F.L. Toninelli, {\em The infinite volume limit in generalized mean field disordered
models}, Markov Proc. Rel. Fields \textbf{9}, 195-207 (2003).

\bibitem{limterm} F. Guerra, F. L. Toninelli, {\em The Thermodynamic Limit in Mean Field
Spin Glass Models}, Comm. Math. Phys. \textbf{230}, 71-79 (2002).

\bibitem{lax} P. Lax, {\em Hyperbolic Systems of Conservation
Laws and the Mathematical Theory of Shock Waves}, SIAM, Philadelphia, 1973.

\bibitem{macfadden} D. McFadden, {\em Economic choices}, American Econ. Rev. \textbf{91},
351 (2001).

\bibitem{enzo} C. Martelli, A. De Martino, E. Marinari, M. Marsili, I. Perez-Castillo, {\em Identifying essential genes in E. coli from a metabolic optimization principle}, Proc. Natl.
Acad. Sc. \textbf{106}, 2607-2611 (2009).

\bibitem{pagnani} D. Medini, A. Covacci, and C. Donati, {\em Protein Homology Network Families Reveal Step-Wise Diversification of Type III and Type IV Secretion Systems}, PLoS  Comput Biol, \textbf{2}, e173 (2006).

\bibitem{MPV} M. Mezard, G. Parisi, M.A. Virasoro, {\em Spin glass theory and
beyond}, World Scientific, Singapore, (1987).

\bibitem{panchenko} D. Panchenko, {\em A connection between the Ghirlanda--Guerra
identities and ultrametricity},  Annals of Probability
\textbf{38},  327-347 (2010).

\bibitem{parisi} G. Parisi, {\em A simple model for the immune
network}, Proc. Nat. Acad. Sc. \textbf{87}, 2412-2416 (1990).

\bibitem{talbook} M. Talagrand,
\emph{Spin glasses: a challenge for mathematicians. Cavity and
mean field models}, Springer Verlag, 2003.

\bibitem{talagrand} M. Talagrand, {\em The  Parisi Formula}, Annals of Mathematics, \textbf{163}, 221-263 (2006).
















\end{thebibliography}


\end{document}